\documentclass[10pt,a4paper]{article}
\usepackage[utf8]{inputenc}
\usepackage[left=3.3cm,right=2.6cm,top=2cm,bottom=2cm]
{geometry}
\usepackage{amsmath, amsthm}
\usepackage{complexity}
\usepackage{mathtools}
\usepackage{amsfonts, dsfont}
\usepackage{amssymb}
\usepackage{physics}
\usepackage{siunitx}
\usepackage{graphicx}
\usepackage{bbold}
\usepackage{enumerate}
\usepackage{sidecap}
\usepackage [autostyle, english = american]{csquotes}
\MakeOuterQuote{"}
\usepackage{float}
\usepackage[toc,page,title]{appendix}
\usepackage{multirow}
\usepackage{braket}
\usepackage{sidecap}
\usepackage{eucal}
\usepackage{enumerate}
\usepackage{setspace}
\usepackage[super]{nth}
\usepackage{hyperref}
\usepackage{titlesec}
\usepackage{todonotes}
\usepackage[title]{appendix}
\usepackage{comment}

\usepackage[capitalise]{cleveref}
\Crefname{lemma}{Lemma}{Lemmas}
\Crefname{proposition}{Proposition}{Propositions}
\Crefname{definition}{Definition}{Definitions}
\Crefname{theorem}{Theorem}{Theorems}
\Crefname{conjecture}{Conjecture}{Conjectures}
\Crefname{corollary}{Corollary}{Corollaries}
\Crefname{example}{Example}{Examples}
\Crefname{section}{Section}{Sections}
\Crefname{appendix}{Appendix}{Appendices}
\Crefname{figure}{Fig.}{Figs.}
\Crefname{equation}{Eq.}{Eqs.}
\Crefname{table}{Table}{Tables}
\Crefname{item}{Property}{Properties}
\Crefname{remark}{Remark}{Remarks}

\DeclarePairedDelimiter\ceil{\lceil}{\rceil}
\DeclarePairedDelimiter\floor{\lfloor}{\rfloor}

\newcommand{\identity}{\mathds{1}}
\definecolor{ha}{rgb}{0.858, 0.188, 0.478}
\definecolor{tk}{RGB}{246,76,246}
\newcommand{\Htarget}{H_{\textup{target}}}
\newcommand{\Hbulk}{H_{\textup{bulk}}}
\newcommand{\Hboundary}{H_{\mathrm{boundary}}}
\newcommand{\Tpe}{T_{\mathrm{PE}}}
\newcommand{\Hpe}{H_{\mathrm{PE}}}
\newcommand{\Hsim}{H_{\mathrm{sim}}}
\newcommand{\field}{\mathds}
\newcommand{\ee}{\mathrm{e}}
\newcommand{\ii}{\mathrm{i}}
\newcommand{\tu}{t_u}

\makeatletter
\newtheorem*{rep@theorem}{\rep@title}
\newcommand{\newreptheorem}[2]{\newenvironment{rep#1}[1]{ \def\rep@title{#2 \ref{##1}} \begin{rep@theorem}}{\end{rep@theorem}}}
\makeatother

\newtheorem{theorem}{Theorem}
\newtheorem{corollary}[theorem]{Corollary}
\newreptheorem{theorem}{Theorem}
\newtheorem{lemma}[theorem]{Lemma}
\newtheorem{defn}[theorem]{Definition}

\usepackage[backend=bibtex,style=numeric,doi=false,isbn=false,url=false,maxbibnames=20,sorting=none]{biblatex}
\bibliography{pbqc}

\begin{document}

\title{Security of quantum position-verification limits Hamiltonian simulation via holography}
\author{ Harriet Apel\textsuperscript{a}, Toby Cubitt\textsuperscript{a}, Patrick Hayden\textsuperscript{b}\\Tamara Kohler\textsuperscript{c,d}, David P\'erez-Garc\'ia\textsuperscript{c}}
\date{\small\textit{\textsuperscript{a}Department of Computer Science, University College London, UK}\\  \textit{\textsuperscript{b}Stanford Institute for Theoretical Physics, Stanford University}\\ \textit{\textsuperscript{c}Instituto de Ciencias Matem\'aticas, Madrid}\\ \textit{\textsuperscript{d}Department of Computer Science, Stanford University} \vspace{-3ex}\normalsize}

\maketitle

\begin{abstract}
We investigate the link between quantum position-verification (QPV) and holography established in \cite{May} using holographic quantum error correcting codes as toy models.
By inserting the "temporal" scaling of the AdS metric by hand via the bulk Hamiltonian interaction strength, we recover a toy model with consistent causality structure.
This leads to an interesting implication between two topics in quantum information: if position-based verification is secure against attacks with small entanglement then there are new fundamental lower bounds for resources required for one Hamiltonian to simulate another.
\end{abstract}

\setcounter{tocdepth}{2}
\tableofcontents
\newpage

\section{Introduction}

Anti-de Sitter spacetime is described by the metric:
\begin{equation}\label{eqn ads metric}
ds^2 = \alpha^2 \left(-\cosh^2 \rho \, dt^2 + d\rho^2 + \sinh^2\rho \, d\Omega^2_{\mathrm{d}-2} \right),
\end{equation}
where in the limit $\rho\rightarrow \infty$ we approach the boundary of the spacetime.
Tensor network models of the AdS/CFT correspondence \cite{Happy,Random,Kohler2019,Kohler2020,Apel2021} encode the spatial component of the AdS metric by placing tensors in the cells of a honeycombing of hyperbolic space.
A free index of each tensor makes up the bulk Hilbert space while the dangling indices at the truncation of the tessellation makes up the boundary Hilbert space.
This truncation occurs at a fixed radius, $R$, that can be taken to be arbitrarily large.
These finite dimensional tensor network models capture various features of AdS/CFT while being mathematically rigorous.

These models provide an encoding map from bulk to boundary and vice versa, realising various aspects of the correspondence explicitly: complementary recovery, Ryu-Takayanagi and a duality between local models.
One could hope to explore dynamics in a semi-classical geometry by considering the dual bulk/boundary Hamiltonians associated with the toy models on space-like slices.
However, the proposed dictionaries lead to superluminal signalling on the boundary.
In \cite{Happy,Random} the boundary operations are non-local, allowing for instantaneous interactions between separate boundary regions. 
In \cite{Apel2021,Kohler2019,Kohler2020} this is improved to give a bulk-boundary mapping that maps (quasi)-local Hamiltonians in the bulk to local Hamiltonians on the boundary. 
Even so, the local boundary operations have large interaction strengths which results in unbounded signalling speeds on the boundary. 
However, from the AdS metric it is manifest that coordinate time is not spatially uniform, but instead exponentially dilates as the observer falls into the bulk.
We argue this time component of the metric needs to be fixed by hand to obtain a toy model with consistent causal structure.

By selecting a bulk Hamiltonian that incorporates this time dilation, we demonstrate that the resulting model of holography exhibits the anticipated causal structure, such that:
\begin{enumerate}
\item The time taken for a particle travelling across the bulk is consistent with the dual particle travelling around the boundary (\cref{fig: butterfly}).
\item The bulk to boundary recovery map gives dual operators that are smeared out (non-local) on the boundary but \emph{in straight light cones} giving constant butterfly velocities \footnote{Butterfly velocities give an effective maximum speed of propagation of any operator but restricted to -- in this case -- the code subspace of low energy excitation on a classical background geometry that cause negligible back reaction as in \cite{Qi2017}.} (\cref{fig: butterfly}).
\end{enumerate}
Both of these features of consistent causality were missing in previous toy models of holography where there was no constraint on the bulk Hamiltonian.

\begin{figure}[h!]
\centering
\includegraphics[trim={0cm 0cm 0cm 0cm},clip,scale=0.45]{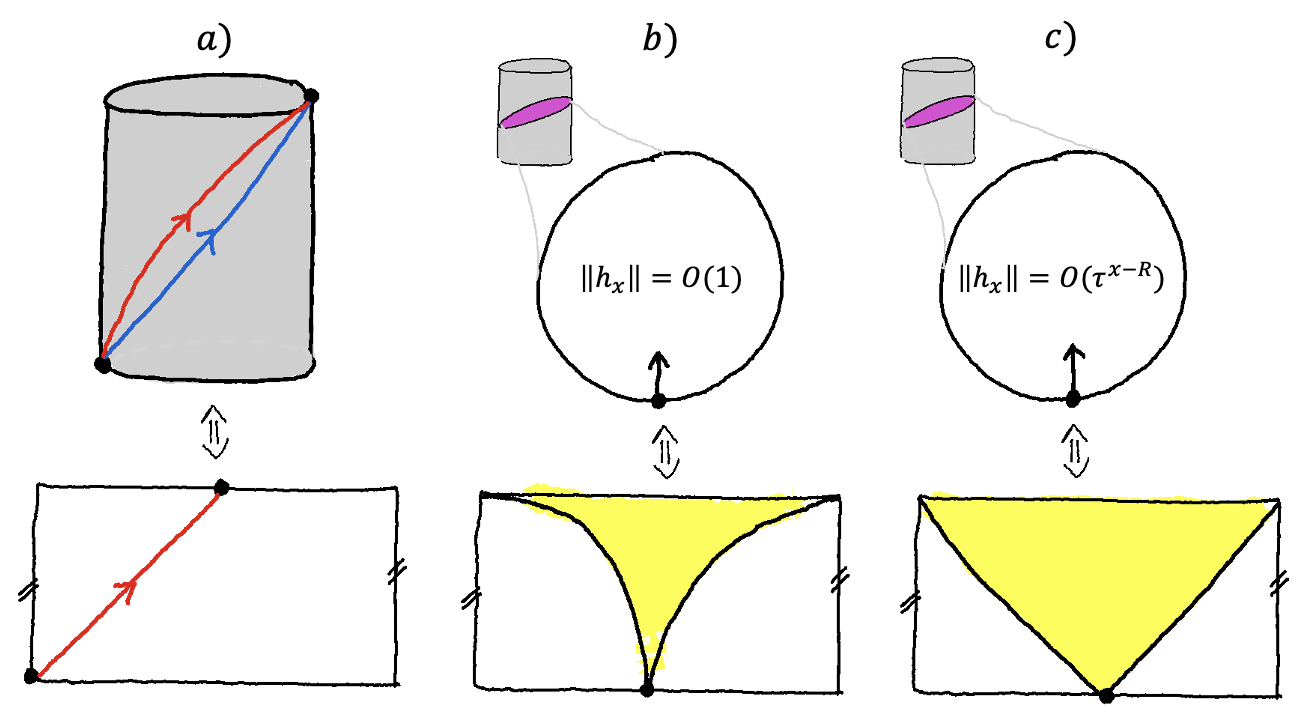}
\caption{a) With the bulk Hamiltonian norm scaling as in \cref{eqn ham scaling}, the time taken for a particle to travel through the finite bulk is consistent with the time taken for the particle to travel around the boundary in the dual theory. 
b) Ignoring the time dilation of the metric creates boundary butterfly velocities are not constant so that as a bulk object propagates to the center of the bulk the speed at which information on the boundary spreads increase. This leads to non-linear causal cones on the boundary. 
c) Accounting for time dilation recovers linear causal cones for recovered boundary operations. 
 }
\label{fig: butterfly}
\end{figure}

One motivation behind constructing tensor network toy models of AdS/CFT with the correct causality structure is to investigate quantum position-based verification (QPV).
Position verification is the cryptographic protocol which aims at securely convincing a third party of one's physical location in spacetime as a means of establishing trust. 
For example, a connection with a server could be trusted on the basis of the server being in the correct country or city.
This can replace settings typically requiring authentication with a private key, which may have been stolen or compromised. 
However, when the parties are classical, secure position verification is impossible \cite{Cha09}, so public-key schemes are the go to in modern cryptography. 
As classical public schemes are rendered insecure with the advent of quantum computers due to Shor's algorithm \cite{Ala19, Shor94}, QPV is one possible replacement in scenarios where geographic position is relevant identification information.

It has been argued that if arbitrary computations can occur in the bulk of AdS, then the bulk-boundary correspondence indicates that QPV can be broken using only linear entanglement \cite{May,May2019,dolev2022holography}. 
This is exponentially better than the best known quantum information protocols \cite{Buhrman_2014,Beigi_2011,cree2022coderouting,dolev2022nonlocal}, and there is evidence that such protocols cannot exist \cite{Junge_2022}.
The mismatch between quantum information and holography has led to speculation that gravity may constrain computation in some way \cite{May_2022}.
 
In this work we study position-based verification protocols in tensor network toy models of AdS/CFT that have the improved causality structure outlined above.
While the construction does not provide a holography-inspired QPV protocol that beats the best known quantum information protocols, it does allow us to relate two seemingly distinct fields of study in quantum information: QPV and Hamiltonian simulation. 
Hamiltonian simulation refers to the task of reproducing aspects of a target Hamiltonian up to some controllable error using qualitatively different interactions e.g. with respect to locality, geometrical structure, interaction type or symmetries.
Analogue Hamiltonian simulation is one of the most promising near-term applications of quantum technology with numerous experimental proof-of-concept experiments \cite{Porras, Jaksch2004, Peng2010, Houck2012}.
It is hoped that as quantum simulators improve they could shed light on poorly understood physics and materials.
It is known that there exist simple, nearest neighbour, geometrically local Hamiltonians on a 2D square lattice that are universal, in the sense that they can simulate \emph{any} Hamiltonian \cite{Cubitt2019}, albeit with the caveat that this requires the simulator system to have very large interaction strengths. 
This control of interaction strengths over many orders of magnitude is beyond the reach of current technology, so known families of universal Hamiltonians are not practical simulators.
However, the current constructions of universal Hamiltonians may not be optimal.
General bounds on the interactions strengths required for one Hamiltonian to simulate another provides insight into where constructive simulation techniques may be optimised to broaden the scope of analogue simulators. 

We are able to show that if QPV is secure against attacks with linear entanglement then this implies fundamental limits on Hamiltonian simulation techniques:
\begin{reptheorem}{them PBQC sim}[QPV and simulation (informal)]
If quantum position-verification is secure against attacks with linear entanglement, then for any universal family of 2-local Hamiltonians $\mathcal{M}$~\footnote{We say that a family $\mathcal{M}$ of Hamiltonians is universal if for {\it any} target Hamiltonian $\Htarget$ there exists an $H \in \mathcal{M}$ that can simulate $\Htarget$.} there exists some target Hamiltonian $\Htarget$ acting on $N$ spins such that the local interaction strength required to simulate $\Htarget$ with a Hamiltonian from $\mathcal{M}$ is lower bounded by a polynomial in the ratio of $N$ to the desired error in the simulation. 
 \end{reptheorem} 
 
See the formal statement of \cref{them PBQC sim} along with \cref{def: attacks} and \cref{def: very good simulators},  for a rigorous statement of the result, which outlines the lower bounds in more detail.

It has previously been shown in \cite{aharonov_2018} that there exist 2-local, all-to-all interactions which cannot be simulated by 2-local interactions with constant degree using $O(1)$ interaction strength.
If the security of QPV against attacks with linear entanglement was demonstrated then \cref{them PBQC sim} would give new lower bounds on achievable simulations.  
Note that in our result by linear entanglement we mean the amount of entanglement needed to break a QPV scheme on $n$ qubits is upper bounded by $C_1n$ where we have $C_1 > 1$.
$N$, the number of qubits that $\Htarget$ acts on in \cref{them PBQC sim}, is proportional to $n$, the number of qubits on which the local unitary in the position based protocol acts on.
Since the constant $C_1$ is greater than 1, it is not known whether QPV is secure against attacks of this type as we are outside the regime in \cite{Tomamichel2013} where it is known to be insecure\footnote{It is known that protocols exist using $n$ qubits that are secure against adversaries with at most $C_1 n$ entanglement where $C_1<-\log_2\left(\frac{1}{2}+\frac{1}{2\sqrt{2}} \right)$ \cite{Tomamichel2013}. If qudits are used this constant can be improved by using monogamy games.}.

There are currently no known Hamiltonian simulation methods that achieve the interaction strength scaling in \cref{them PBQC sim}.
Our argument  implies that the existence of such techniques would provide a protocol to break position-based verification with only linear entanglement.
Note that the converse is not true -- it may be possible to break position-based verification via other methods that say nothing about fundamental limits of simulation. 

While \cref{them PBQC sim} is stated in terms of general unitaries and general Hamiltonians, there is an immediate corollary about the security of families of unitaries and the complexity of simulating the corresponding Hamiltonians.
The corollary says that Hamiltonians corresponding to unitaries that are secure for QPV are hard to simulate.\footnote{Note that the hard-to-simulate Hamiltonian that corresponds to the unitary is not quite the Hamiltonian that generates the unitary -- it is related to this Hamiltonian by an isometric mapping -- see \cref{them PBQC sim entire physics} for details.}
Imposing no structure on the bulk unitary, and using \cref{them PBQC sim} to bound the resources to simulate global Hamiltonians, gives a no-go simulation result in line with what could be expected from physical intuition: simulating a highly non-local Hamiltonian by a local Hamiltonian will require strong interactions on the simulating system. 
However, when considering restricted classes of unitaries the bound implied by holography becomes more interesting. 
For example, non-trivial unitaries for QPV can be generated by $k$-local bulk Hamiltonians which leads to  sparse target Hamiltonian.
In this case by optimising existing simulation techniques we can in fact come very close to saturating the bound from \cref{them PBQC sim}.
Saturating the bound would imply that unitaries generated by $k$-local bulk Hamiltonians cannot be used in QPV protocols that are secure against linear entanglement. 
On the other hand, showing that such unitaries can provide secure QPV protocols, would imply that the simulation technique derived here is close to optimal for this class of Hamiltonians.  
Previous results have shown that circuits that have low complexity as measured by their $T$-gate count \cite{speelman:2016} do not provide secure unitaries for QPV.\footnote{In \cite{speelman:2016} it is demonstrated QPV protocols using circuits with logarithmic numbers of $T$ gates can be broken using only polynomial entanglement.}
Our result can be viewed as another way of exploring this relationship between security for QPV and complexity. 

The rest of this paper is structured as follows. 
In \cref{sect: Causal struct} we describe how the time dilation described above can be fixed by hand into the tensor network models to yield the features of consistent causal structure described above.
In \cref{sect: pbqc} we then use these models to prove \cref{them PBQC sim} for a strong definition of simulation where we have a relationship between simulation parameters.
We then compare the bounds on simulation in \cref{them PBQC sim} to the optimised version of best-known simulation techniques for sparse Hamiltonians.

\section{Causal structure of toy models of AdS/CFT from time dilation}\label{sect: Causal struct}

\subsection{Time dilation in the bulk}

The bulk of the holographic duality is described by the $\text{AdS}$ metric.
In this metric time for an observer in the centre of the bulk is dilated with respect to the time experienced by an observer sitting on the conformal boundary.
This can be seen by looking at the $\text{AdS}_{\mathrm{d}}$ metric in global coordinates
\begin{equation}\label{eqn ads metric 2}
ds^2 = \alpha^2 \left(-\cosh^2 \rho \, dt^2 + d\rho^2 + \sinh^2\rho \,  d\Omega^2_{\mathrm{d}-2} \right)
\end{equation}
where $\rho \in \mathbb{R}^+$ and in the limit $\rho\rightarrow \infty$ we approach the boundary of the spacetime.
$\alpha$ is the radius of curvature, related to the cosmological constant by $\Lambda = \frac{-(\mathrm{d}-1)(\mathrm{d}-2)}{2\alpha^2}$.
At fixed spatial position ($\rho, \mathbf{\Omega}$) the proper time $d\mathcal{T} = - ds$ is given by,
\begin{equation}
d\mathcal{T} = \alpha \cosh \rho \,dt.
\end{equation}

This "time dilation" can be observed by comparing the coordinate time for two observers at different distances into the bulk.
Consider one observer at $\rho_0$ and one at $\rho_1$ with $0<\rho_0<\rho_1$.
The relationship between the two coordinate times with the same proper time is,
\begin{equation}
dt_0  = \frac{\cosh \rho_1}{\cosh \rho_0} dt_1.
\end{equation}
Note that $dt_0 > dt_1$ since for positive arguments $\cosh$ is monotonically increasing $\frac{\cosh \rho_1}{\cosh \rho_0}>1$.

\subsection{"Fixing" time dilation in the toy model}

The spatial component of the AdS metric is incorporated into the model by placing the tensors in the cells of a honeycombing of hyperbolic space.
In order to emulate the time component of the metric we need to insert this dilation by hand.
Since this is a finite model we consider a truncated version of AdS spacetime where instead of infinity the conformal boundary exists at a fixed maximum radius, $R$, that can be taken to be arbitrarily large.
This truncation fixes a central tensor that represents the deepest part of the bulk.  
In translating the metric onto the tensor network model, the distance from the centre of the bulk is now quantified by the number of layers of tensors from this central tensor, which we denote $x\in \mathbb{Z}^+$ (see \cref{fg bulk radius}).

In constructing a tensor network toy model there are a number of free parameters: the type of tensor, the radius of truncation $R$, the local Hilbert space dimension $d_b$ of each dangling leg associated with the bulk degrees of freedom and the local Hilbert space dimension $d_\partial$ of the tensor legs contracted in the bulk and associated with the boundary degrees of freedom.
Previous tensor network models \cite{Happy,Kohler2019} use perfect tensors\footnote{Perfect tensors correspond to absolutely maximally entangled states (see Definition 2 of \cite{Happy}) and have the property that the tensor is an isometric map across any bipartition of legs.} for the construction whereas other constructions use high-dimensional random tensors\footnote{Random tensors correspond to Haar random states (see Section 2.2 of \cite{Apel2021}) and in the limit of large bond dimension are perfect tensors with high probability.} \cite{Random,Apel2021} -- both of which have desirable bulk-boundary reconstruction and entanglement properties that mirror the true AdS/CFT correspondence.
For convenience, in later sections we chose to make each tensor `leg' a bundle of qubits rather than a singular qudit. 
Let $n$ denote the number of qubits in each bulk leg, and $m$ denote the number of qubits in each leg which is contracted in the bulk or left uncontracted on the boundary, as in \cref{fg bulk radius}.
In constructions using perfect tensors $n=m$, whereas in random tensor constructions it is essential for the desirable features of the models to take $m\gg n$.

\begin{figure}[h!]
\centering
\includegraphics[trim={0cm 0cm 0cm 0cm},clip,scale=0.3]{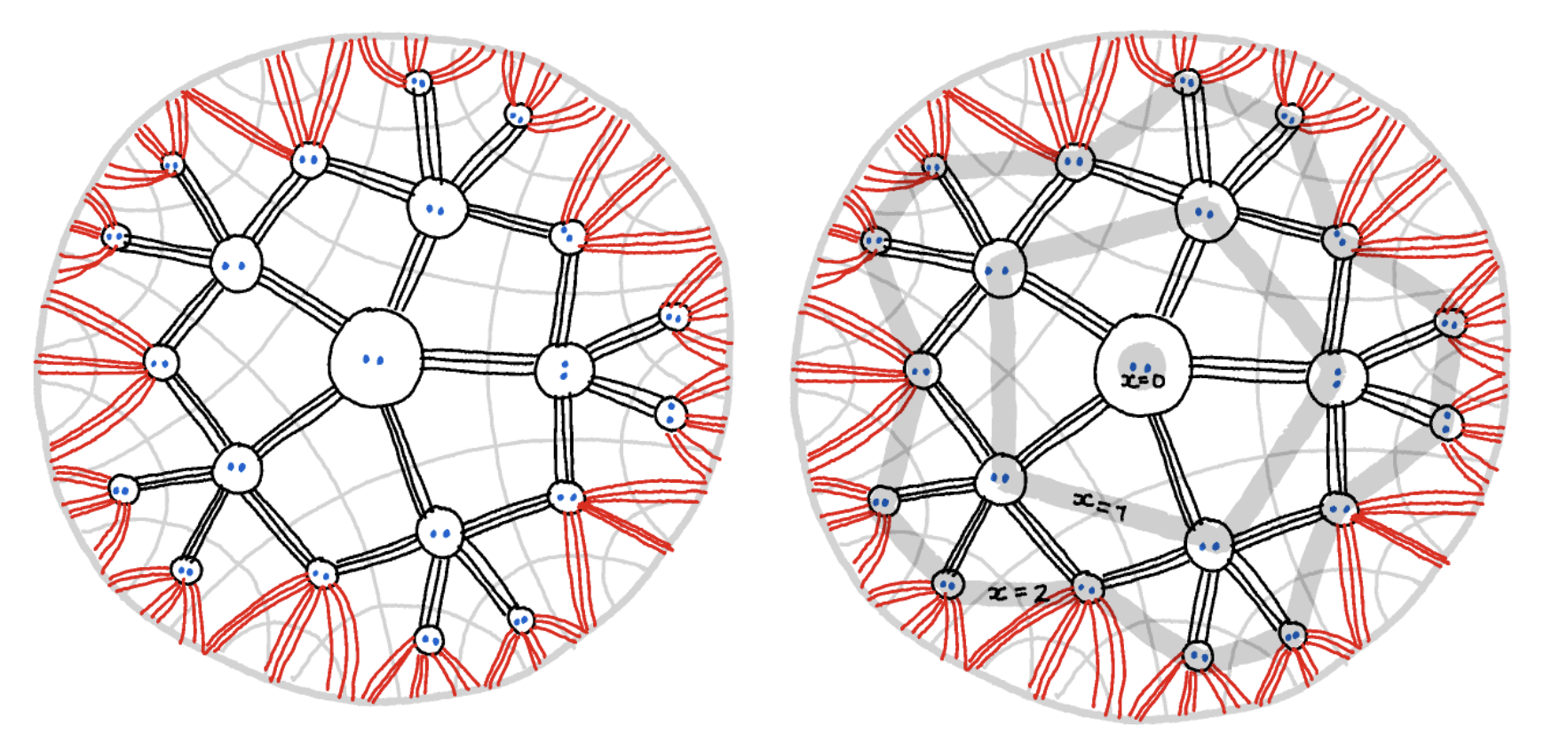}
\caption{Left) tensor network with $n=2$ and $m=3$. Right) Bulk of AdS. $x$ denotes the position from the centre and the conformal boundary is at $R=2$.}
\label{fg bulk radius}
\end{figure}

There is a choice when relating the natural "radius" in the tensor network model to the measure of radius in the metric, $\rho$.
The relationship between the value of $\rho$ at position $x$, denoted $\rho_x$, and $x$ must be consistent with how the curvature of AdS space causes the length swept out by the angle coordinate to grow as the radius increases. 
In the tensor network picture the surface area at $x$ is given by the number of degrees of freedom in that layer $N(x)$. 
If each tensor has $n$ dangling bulk indices, the length then corresponds to $n$ times the number of tensors in the layer, which grows with $x$ exponentially giving:
\begin{equation}\label{eqn N(x)}
N(x)\propto n\tau^x,
\end{equation}
where $\tau$ is a constant associated with the hyperbolic honeycombing\footnote{The constant $\tau$ is the growth rate of the Coxeter group associated with the honeycombing -- see section 5.2 of \cite{Kohler2019} and definitions within.}.
To be consistent $N(x)$ should correspond to the circumference of a circle with radius $\rho_x$, $c(\rho_x)$, as taken from the metric in \cref{eqn ads metric}:
\begin{equation}\label{eqn c(px)}
c(\rho_x) = \int_0^{2\pi} \sinh \rho_x d\Omega = 2\pi \sinh\rho \sim e^\rho.
\end{equation}
Equating \cref{eqn N(x)} and \cref{eqn c(px)} gives the correspondence between $x$ and $\rho_x$:
\begin{align}
N(x) &= c(\rho_x)\\
n\tau^x & = e^{\rho_x}\\
x \ln \tau + \ln n &= \rho_x.
\end{align}

Therefore, the coordinate time at the truncated boundary, $t_R$, is related to the coordinate time in the bulk at the $x$'th tensor layer by,
\begin{align}
dt_x  &= \frac{\cosh \rho_R}{\cosh \rho_x} dt_R\\
& = \frac{e^{\ln(n\tau^R)}+ e^{-\ln(n\tau^R)}}{e^{\ln(n\tau^x)}+ e^{-\ln(n\tau^x)}} dt_R\\
& \sim \tau^{R-x} dt_R.\label{eqn time scaling}
\end{align}
Note the scaling is independent of $n$ and $m$.

In holography the entire physics of the bulk has a representation on the boundary, hence to consider a boundary theory associated with a boundary coordinate time it is more convenient to put this time dilation into the bulk Hamiltonian as a renormalisation.
This is mathematically equivalent since the Hamiltonian and time always appear in a product. 
Assume we have a geometrically 2-local Hamiltonian in the bulk where interacting terms only occur between tensors in neighbouring tessellation cells:
\begin{equation}
H_\textup{bulk} = \sum_{i} h^{(i)}_x,
\end{equation}
where $h_x$ acts trivially on all tensors in layers $y<x$ of the tessellation (since the Hamiltonian is 2-local it will also act trivially on the majority of tensors in layers $\geq x$ but for time dilation this is less relevant).

To encode the time dilation scaling in \cref{eqn time scaling} the norm of the bulk interaction terms should decay exponentially towards the centre of the bulk:
\begin{equation}\label{eqn ham scaling}
\norm{h_x} = O\left(\tau^{x-R} \right).
\end{equation}
The Hamiltonian terms acting at $x=R$ are $O(1)$ which is consistent since here the bulk and boundary proper time coincide. 

\subsection{Rescaled Hamiltonians in AdS}

We have argued for the above scaling via time dilation, however this rescaling is also manifest looking at Hamiltonians in AdS.
For example this exponential scaling is seen in the Hamiltonian of a free particle moving in AdS: $H_\textup{free particle} = \int d\rho \, d \phi \, \cosh \rho \left(\tanh \rho \, d \psi_0 \right)^2$.
More generally recall that on a background described by metric $g_{\mu \nu}$, the Hamiltonian takes the form 
\begin{equation}
H = \int d^{d-1}x\sqrt{h}\mathcal{H}.
\end{equation}
For $\sqrt{h}$ the metric determinant (of the spatial part of the metric only) and $\mathcal{H}$ the Hamiltonian density.
For $\text{AdS}_{2+1}$, the metric determinant in these coordinates is $\cosh(\rho)$, this gives the exponential scaling of the interaction strength.

\subsection{Sending a particle through the bulk/around the boundary}\label{sect: send through}

We have argued that choosing a bulk Hamiltonian that scales as \cref{eqn ham scaling} is necessary to have causal structure in the bulk/boundary in a toy model of holography consistent with the AdS metric.
It is notable that a similar exponential envelope on the interaction strengths is also exactly what is used in \cite{Bettaque} to obtain the correct specific heat capacity of the SYK model from a MERA-like network.
Consistent causal structure is manifest when we consider a particle travelling from one side of the boundary to the other as in \cref{fg through bulk}.

\begin{figure}[h!]
\centering
\includegraphics[trim={0cm 1.5cm 0cm 1.5cm},clip,scale=0.3]{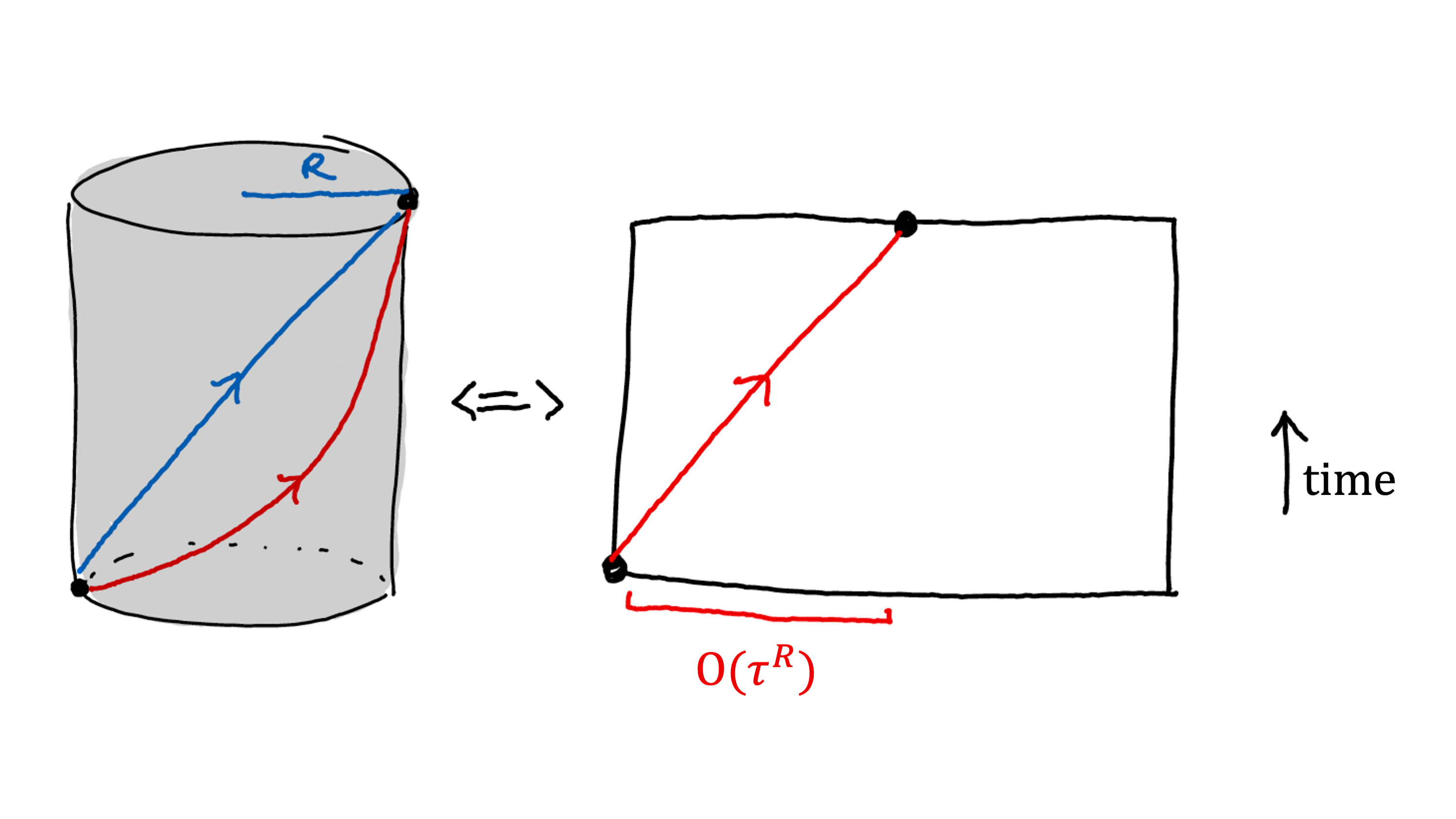}
\caption{Sending a particle sent through the bulk/around the boundary.}
\label{fg through bulk}
\end{figure}

In the bulk theory the particle falls in to the bulk from the asymptotic boundary and eventually reaches the other asymptotic boundary evolving under the bulk Hamiltonian.
In the boundary theory the particle travels around the boundary. 
On the boundary the particle has to traverse $O(\tau^R)$ tensors, whereas in the bulk it only has to make $O(R)$ `jumps'.
If the bulk and boundary Hamiltonian were both 2-local and had uniform norm then the two theories would take different times to achieve the same outcome breaking the consistency of the holographic picture.

Under a Hamiltonian with interaction scaling as \cref{eqn ham scaling}, it takes the falling particle $O(n\tau^{R-x})$ time to travel from tensor layer $(x+1)$ to $x$, an $O(n)$ distance. 
Hence the particle takes time
\begin{equation}
T_1 = 2n\sum_{x=0}^R \tau^{R-x} = \frac{2n(\tau^{R+1}-1)}{\tau -1} = O(n\tau^R),
\end{equation}
to travel through the bulk. 
Assuming a geometrically 2-local boundary Hamiltonian with all interactions scaling as $\norm{h_R} = O(1)$ to be consistent with \cref{eqn ham scaling}, the particle takes time
\begin{equation}
T_2 = \sum_{i=0}^{O(m\tau^R)} 1 = O(m\tau^R),
\end{equation}
to travel around the boundary. 

The scalings of $T_1$ and $T_2$ match up with respect to $R$, hence by choosing an appropriate bulk Hamiltonian we have restored a consistent causal structure for this basic task. 
If $m\neq n$ then the boundary Hamiltonian must scale as the ratio $\norm{h_\textup{boundary}} = O(m/n)$ to recover $T_1 = T_2$.
It is important that the exponential decay of the bulk Hamiltonian norm have the exponent given in \cref{eqn ham scaling}.
Having even a different constant factor in the exponent would cause the above bulk/boundary tasks to take different times and therefore lose the connection to holography.

\subsection{Causal behaviour of perturbations from butterfly velocities}

Another motivation for rescaling the bulk Hamiltonian is that when considering a bulk perturbation, the dual boundary causal cone is now straight, with constant velocity regardless of where in the bulk the perturbation is. 
In previous models with uniform strength bulk Hamiltonians the information would appear to propagate faster on the boundary the deeper the dual excitation is in the bulk.
The rescaling makes sense of this peculiarity.

We will use the bulk Lieb-Robinson velocity to determine the butterfly velocity via the duality.
In the bulk there is a geometrically 2-local Hamiltonian that respects the metric of the bulk.
\begin{defn}[$k$-local Hamiltonian]\label{defn local Hamiltonian}
Given a Hamiltonian $H = \sum_j h_j$ on $n$ qudits $(\mathbb{C}^d)^{\otimes n}$.
We say that $H$ is $k$-local if each $h_j$ acts `non-trivially' on a subset $S_j$ of at most $k$ qubits and as the identity everywhere else.
I.e.
\[\forall j:\qquad h_j = h_{S_j}\otimes \mathbb{1}_{n\setminus S_j} \quad \textup{and} \quad \abs{S_j}\leq k.\]
\end{defn}
\noindent A local Hamiltonian introduces the concept of a interaction graph and hence an interaction distance between two regions.
\begin{defn}[Interaction distance]\label{defn Interaction distance}
Given a $k$-local Hamiltonian $H = \sum_{S_j}h_j$  and two subsets of qubits $X$ and $Y$ the interaction distance 
\[d(X,Y):= \min \abs{\left\{ S_j : X\cap S_j,S_{j}\cap S_{j+1},S_n\cap Y \neq \empty \right\}_{j=1,..,n}}\]
is the minimum number of steps alongs interactions to get from $X$ to $Y$.
\end{defn}
\noindent The Lieb-Robinson velocity is the maximum speed that information can propagate in a theory, formally in the Heisenberg picture
\begin{theorem}[Lieb-Robinson velocity \cite{LRbound}]\label{defn LR}
Given a $k$-local Hamiltonian, $H= \sum_Z h_Z$ such that $\exists$ $\mu,s>0$ where for all qudits $i$: $\sum_{Z\ni i}\norm{h_Z}\leq s e^{-\mu}$, consider $A_X$ and $B_Y$ are operators acting on subsets of qudits $X$, $Y$.
Then, 
\begin{equation}
\norm{\left[A_X(t),B_Y \right]}\leq \min \{ \abs{X},\abs{Y}\} \norm{A}\norm{B}e^{-\mu(d(X,Y)-vt},
\end{equation}
where $d(X,Y)$ is the interaction distance between sets $X$ and $Y$ and $v$ is the \emph{Lieb-Robinson velocity}.
For such a local Hamiltonian $v= \frac{2ks}{\mu}$.
\end{theorem}

While Lieb-Robinson speeds bound the propagation of any operator, butterfly velocities give an effective maximum speed of propagation of any operator \emph{restricted to a given subspace} of the Hilbert space.
In \cite{Qi2017} they study butterfly velocities in the context of holography where the subspace in question is the code subspace of low energy excitation on a classical background geometry that cause negligible back reaction. 
In the tensor network picture, these holographic butterfly velocities correspond to restricting the boundary to the codespace associated with a fixed tensor geometry with a tensor present in each cell of the tessellation\footnote{See \cite{Kohler2019} section 3.2 for a discussion of how considering a superposition of multiple networks with different tensors removed could encode dynamical geometry.}. Formally,

\begin{defn}[Butterfly velocity $v(O_A, \mathcal{H}_c)$; \cite{Qi2017}]\label{dfn butterfy}
Given a boundary code subspace, $\mathcal{H}_c$ and a boundary operator supported on boundary region $A$, $O_A$, the butterfly velocity, $v(O_A,\mathcal{H}_c)$ is the minimum velocity such that $\forall$ $\epsilon>0$ $\exists$ $\delta >0$ where $\Delta t < \delta$ and the following two statements are satisfied:
\begin{enumerate}
\item If $D>\left(v(O_A,\mathcal{H}_c) +\epsilon\right)\Delta t$ then \\
$\forall$ $B\in \{R|d(R,A)=D \}$\\
$\forall$ $O_B$ supported in boundary region $B$\\
$\forall$ $\ket{\psi_i},\ket{\psi_j}\in \mathcal{H}_c$:
\begin{equation}
\bra{\psi_i} \left[O_A, O_B \right] \ket{\psi_j} = 0.
\end{equation} 
\item If $D<\left(v(O_A,\mathcal{H}_c) -\epsilon\right)\Delta t$ then \\
$\exists$ $C\in \{R|d(R,A)=D \}$\\
$\exists$ $O_C$ supported in boundary region $C$\\
$\exists$ $\ket{\psi_i},\ket{\psi_j}\in \mathcal{H}_c$:
\begin{equation}
\bra{\psi_i} \left[O_A, O_C \right] \ket{\psi_j} \neq 0.
\end{equation} 
\end{enumerate}
Where $d(A,B)$ is the interaction distance.
\end{defn}

\begin{figure}[h!]
\centering
\includegraphics[trim={0cm 0cm 0cm 0cm},clip,scale=0.36]{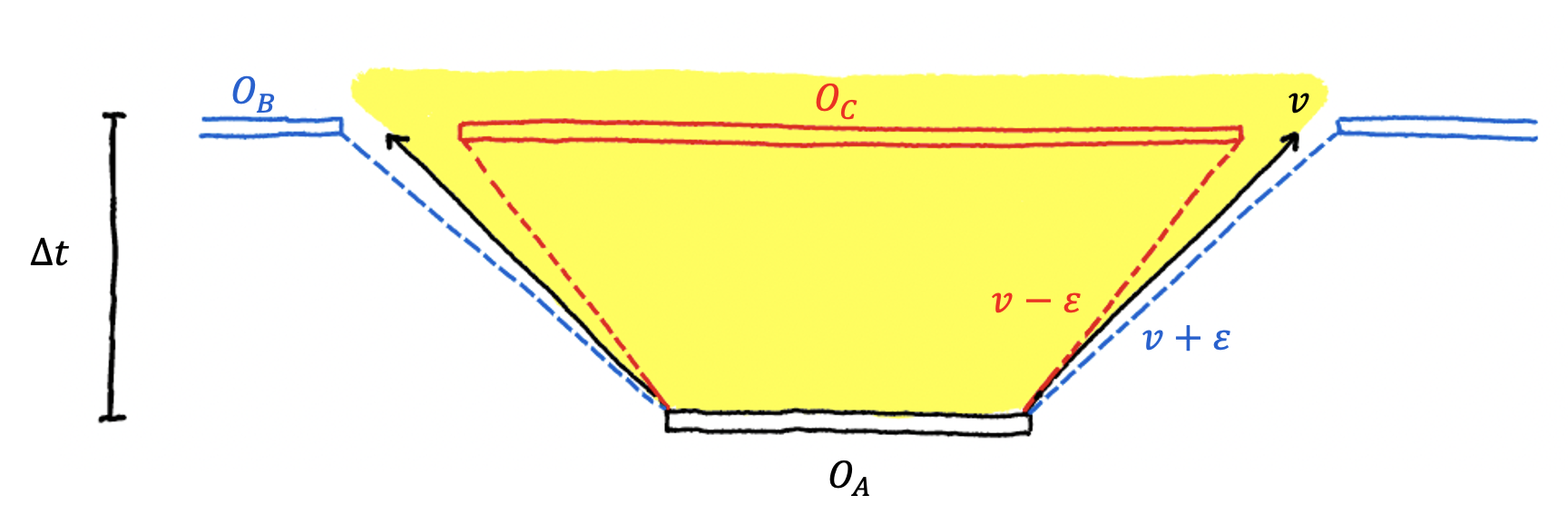}
\caption{Depiction of the definition of butterfly velocity given in \cref{dfn butterfy}. }
\label{fg butterfly defn}
\end{figure}

Section 4 of \cite{Qi2017} describes how a bulk theory with a known speed of light determines the boundary butterfly velocity.
See \cref{fg butterfly protocol} for a summary.

\begin{figure}[h!]
\centering
\includegraphics[trim={0cm 0cm 0cm 0cm},clip,scale=0.35]{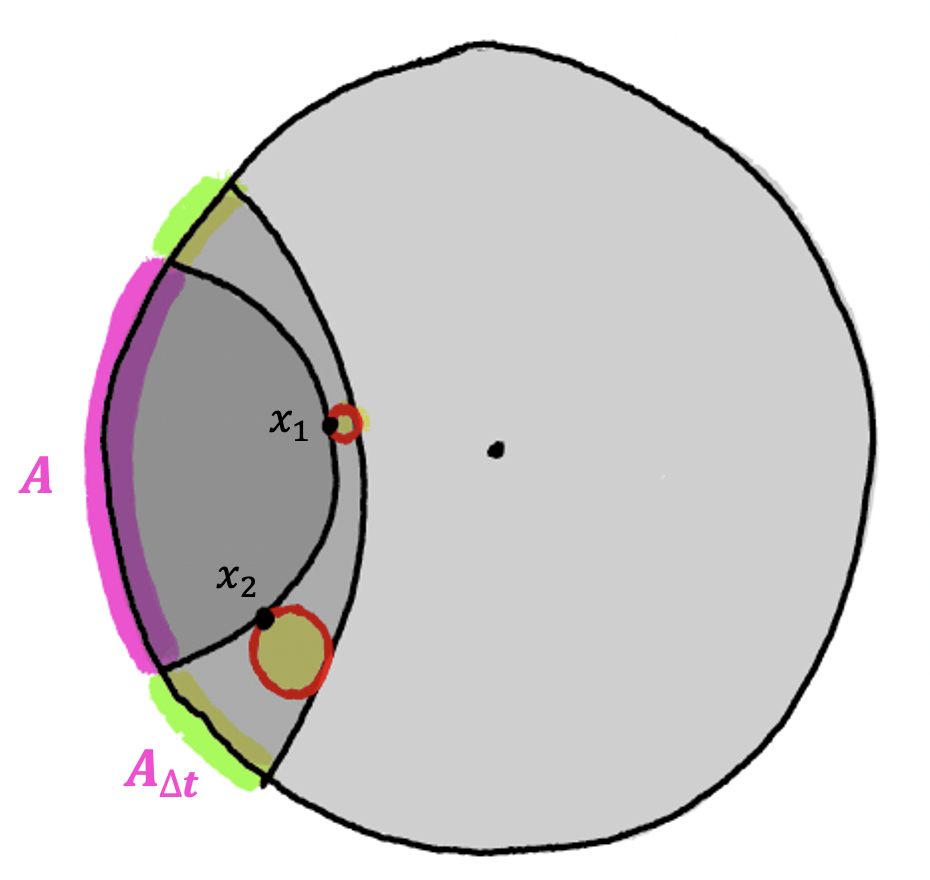}
\caption{Bulk Lieb-Robinson speed gives boundary butterfly velocity. In AdS/CFT a boundary subregion $A$ can recover any bulk operator that lies within its entanglement wedge, defined by the extremal surface $\gamma_A$ homologous to $A$. The same subregion $A$ contains no information about bulk operators outside of this entanglement wedge. 
At boundary time $t=0$ consider a boundary subregion $A$ and its extremal surface $\gamma_A$. Take any bulk point $x$ infinitesimally inside the entanglement wedge from $\gamma_A$. Any bulk operator supported at $x$, $\phi_x$, can be reconstructed on the boundary subregion $A$, call the resulting boundary operator $O_A[\phi_x]$. The butterfly velocity $v(O_A[\phi_x], \mathcal{H}_c)$ for the codespace $\mathcal{H}_c$ is equal to the minimal velocity $v^*$ such that the extremal surface bounding the expanded region $\{A_{\Delta t} | d(A,A_{\Delta t}) = v^* \Delta t)\}$ is tangential to the light-cone of $x$ in the bulk defined by the bulk Lieb-Robinson velocity.
This figure depicts that if the bulk operator is picked at a smaller radius $x_1$ along the minimal geodesic than another $x_2$, then this formalism implies that the light cone spread at $x_1$ must be smaller than at $x_2$ in order to coincide with the minimal geodesic of $A_{\Delta t}$. }
\label{fg butterfly protocol}
\end{figure}

The complementary recovery feature necessary for the butterfly velocity argument in \cite{Qi2017} is exhibited in many tensor network models of AdS/CFT, for example \cite{Happy, Random, Kohler2019, Kohler2020, Apel2021}\footnote{For the networks comprised of perfect tensors there are some pathological boundary subregions where complementary recovery is violated, however it was shown in \cite{Apel2021} with a network of random stabilizer tensors and special tessellations that this holds for any boundary subregion.}.
For a geometrically 2-local bulk Hamiltonian with individual interactions bounded in norm by \cref{eqn ham scaling} a bulk Lieb-Robinson velocity is given by
\begin{equation}
v_\textup{bulk}(x) = 2 \norm{h_x} = O(\tau^{x-R}).
\end{equation}
Hence the bulk Lieb-Robinson velocity varies with the depth into the bulk -- a feature of inserting by hand the dilation of proper time. 

Consider a connected subregion of the boundary, $A_t$ where the deepest the entanglement wedge extends into the bulk is to tensor layer $(x+1)$.
Complementary recovery and the geometry of the tessellation dictates the subregion will contain $O(m\tau^{R-x-1})$ boundary qubits. 
By the same argument the augmented boundary subregion $A_{t+\Delta t}$ that now extends deeper into the  bulk to the $x$'th layer contains $O(m\tau^{R-x})$ boundary qubits. 
By butterfly velocities, the causal boundary subregion has increased by $(\tau - 1)m\tau^{R-x-1} = O(m\tau^{R-x})$ boundary qubits while the bulk particle has moved from layer $x+1$ to layer $x$.
Now consider the lightcone of the edge of the entanglement wedge of $A_t$.
By choosing a bulk Hamiltonian to scale as in \cref{eqn ham scaling} the boundary-coordinate-time for the lightray to reach the $x$'th layer from the $(x+1)$'th layer is $O(n\tau^{R-x})$.
Therefore the boundary butterfly velocity is $O(m/n)$ crucially independent of $R$ and $x$.

Therefore via butterfly velocities this section has argued that given any holographic tensor network model that exhibits complementary recovery, if a local Hamiltonian with exponentially decaying interaction strength is placed in the bulk, then local bulk operators are smeared out non locally in the boundary in straight light cones.
This implies a local boundary Hamiltonian that also has a fixed speed of light as expected in CFTs.
Restricting the bulk Hamiltonian to those that reflect the time dilation of the AdS metric is necessary for this aspect of the causal structure of the correspondence to be observed in toy models.

\section{Secure finite dimensional QPV implies an upper bound on simulation techniques}\label{sect: pbqc}

\subsection{Quantum position-verification}

A key motivation for exploring the causal structure of holographic toy models is the potential of holographic protocols to provide insight into the security of quantum position-verification (QPV).

The fundamental idea behind QPV is that the physical location of a party serves as a unique identifier and can be used to establish trust.
Unlike traditional cryptographic schemes that rely on keys and passwords, the authentication factor here is position in space-time. 
The honest prover publicly claims to control the space-time region and is allowed to send and receive quantum messages in all directions and perform quantum computation.
There are multiple third-party referees that together look to gain cryptographic evidence of the prover's location.
They send challenges to the prover in the form of quantum states and the prover should respond to the inputs by applying a unitary $U$ and sending answers back to the referees in a set time.
$U$ is part of the protocol and thus public knowledge.
The size of the protocol is determined by the total size of the challenge, $n$ qubits.
$U$ is in general a $n$ qubit unitary. 
The protocol set-up and attack model are demonstrated in \cref{fig:circuitdiag}.

\begin{figure}[h!]
\centering
\includegraphics[trim={0cm 0cm 0cm 0cm},clip,scale=0.4]{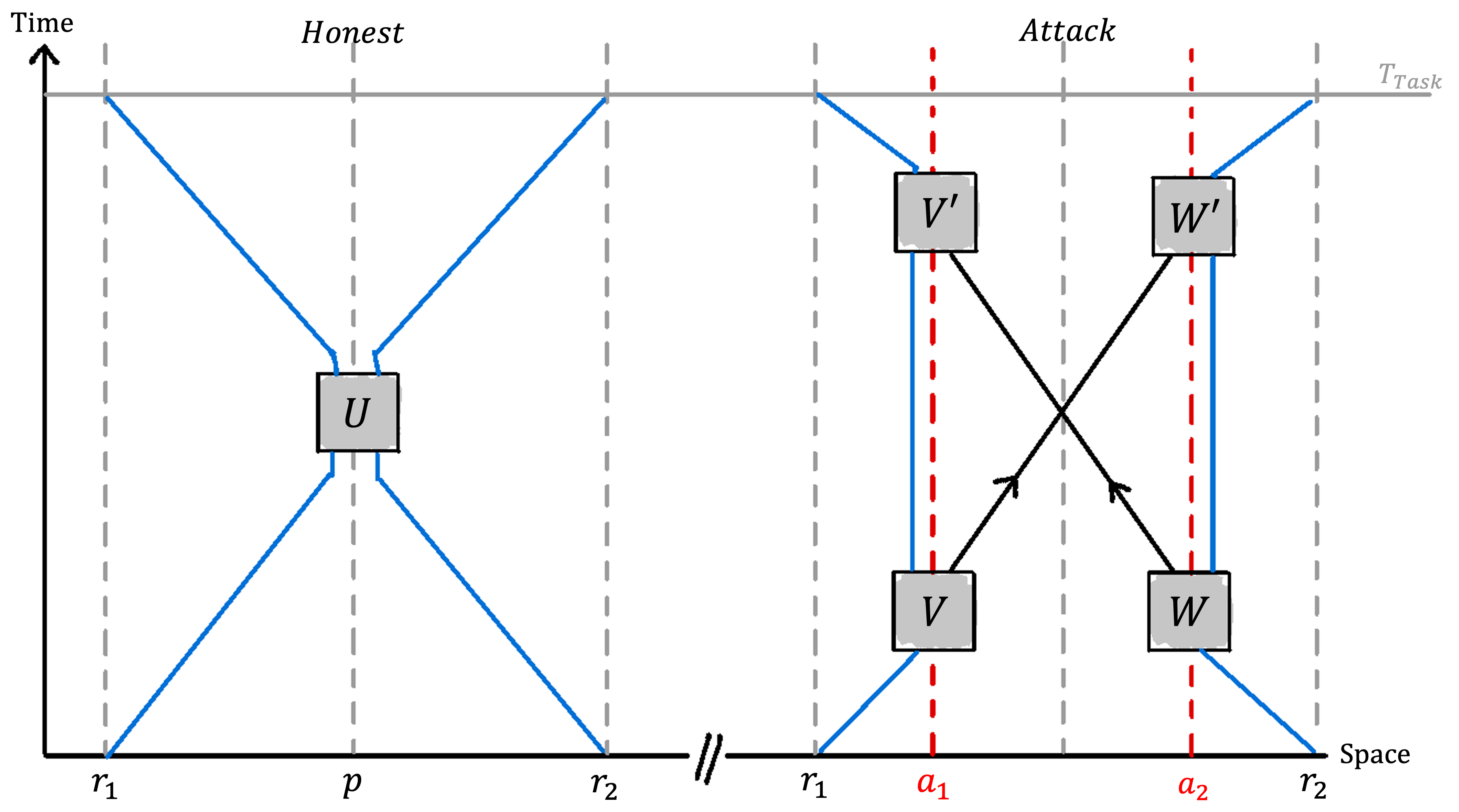}
\caption{Circuit diagram for the honest protocol and attack model for quantum position-verification in the (1+1) dimensional case. \textbf{Left}: Honest protocol. The honest prover claims to be at spatial position $p$ and is surrounded by two third-party referees $r_1$, $r_2$ working together to test this claim. They each send to position $p$ a quantum state, in total of $n$ qubits. If the prover is at $p$ they receive these inputs and apply a local $n$-qubit unitary $U$ and send half the output to each referee which they receive at $T_\text{Task}$. If the outputs reflect the expected outcome of applying the $U$ to the inputs then the referees accept this as cryptographic evidence of the provers location. \textbf{Right}: Instead of an honest prover in the attack model there are two adversaries $a_1$, $a_2$ outside of the position $p$. They each intercept an input and implement the first round of local unitaries $V$, $W$. They are allowed one round of simultaneous communication. They then perform a second round of local unitaries $V'$, $W'$ on their part of the challenge before sending the output to respective referees by $T_\text{Task}$. If the referees are again satisfied the answer reflects successful application of $U$ they accept and the position-verification has been rendered insecure. The two adversaries are allowed to pre-share an entangled state. The question of security of QPV then reduces to how amount of the pre-shared entanglement required to break the protocol scales with the size of the protocol ($n$).}
\label{fig:circuitdiag}
\end{figure}

A group of adversaries outside of the specified region would compromise the security of the QPV protocol if they could spoof the verifiers into believing they are in the prescribed location.
It is known that it if the adversaries are allowed to share arbitrary entangled states then they can break any QPV protocol \cite{Buhrman_2014}.
However, sharing arbitrary entangled states is a strong requirement in practise. 
The key question for determining security of QPV is: how much entanglement do the attackers have to share to spoof the secure location?
Given that the honest prover implements some local operation on $n$ qubits, it is known that the upper bound on the minimal amount of entanglement required is $O(\exp(n))$ \cite{Beigi2011,Dolev}.
This would indicate that the protocol is reasonably secure.
However, the known lower bounds are only $O(n)$, \cite{Tomamichel2013}.
Therefore tightening the bounds on the optimal entanglement required to spoof QPV is an interesting open question.

\begin{defn}[Upper bounded linear attacks]\label{def: attacks}
An upper bounded linear attack on a QPV protocol is any attack where the pre-shared state held by the adversaries has mutual information upper bounded by $C_1n$ for some constant $C_1>1$, where $n$ is the number of qubits the unitary acts on in an honest implementation of the protocol.
\end{defn}

A protocol is secure against upper bounded linear attacks then there is no attack  for all constants $C_1$ that can break the security.

\subsubsection{Connection between QPV and holography}\label{sect intro asymp task}

\cite{May2019} introduced the concept of asymptotic quantum tasks, where a quantum information processing task is carried out in the bulk of AdS with inputs given and received at the conformal boundary of the spacetime.
In this context the holographic principle states that asymptotic quantum tasks are possible in the boundary if and only if they are possible in the bulk. 
Given AdS/CFT in (2+1)-dimensional bulk, it is possible to place the inputs and outputs of the task such that in the bulk there exists a region which is in the forward light cone of both inputs and the backward light cone of both outputs, whereas no such region exists on the boundary (see \cref{fig:pbqc}).
In the bulk, the task can be trivially completed: the inputs can be brought together and a local unitary applied before sending the two halves of the outputs to the corresponding delivery points.
However, while the holographic principle states that this implies the same task must be feasible on the boundary, there is no local region on the boundary where the computation can take place.
The boundary task must be completed via a non-local protocol making use of the entanglement in the boundary theory.

\cite{May2019} and \cite{May} suggested that such a holographic scattering experiment could be a interesting candidate for an attack on QPV with low entanglement. 

\begin{figure}[h!]
\centering
\includegraphics[trim={0cm 0cm 0cm 0cm},clip,scale=0.5]{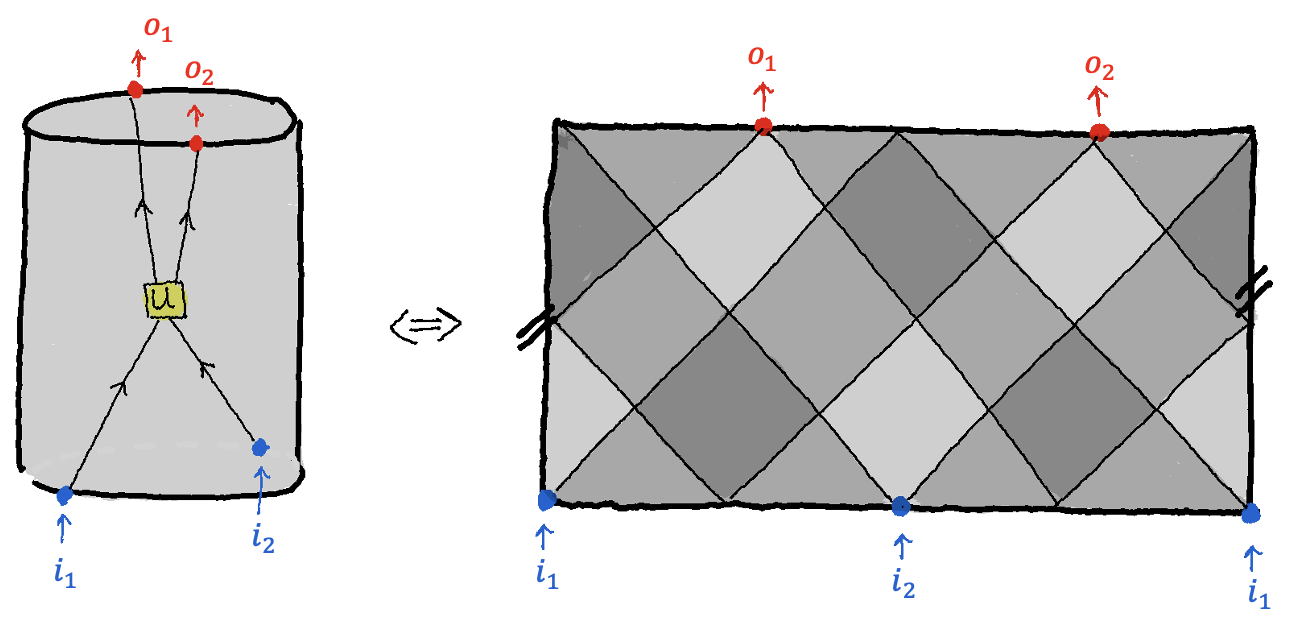}
\caption{The set up for our holography inspired proof of \cref{them PBQC sim}. On the left is the bulk theory where the light cones of $i_1$, $i_2$, $o_1$ and $o_2$ do intersect so a local computation is done in the centre of the bulk. This is parallel to the left side of \cref{fig:circuitdiag}. Due to the duality, the same result is obtained in the bulk and boundary theory. On the right in the boundary there is no region intersection of all light cones simultaneously and therefore the boundary theory achieves $U$ via non-local computation. We will later establish adversaries in the boundary and rounds that link this to the attack model in \cref{fig:circuitdiag}. }
\label{fig:pbqc}
\end{figure}

\subsection{Hamiltonian simulation}\label{sec:ham sim}

The next section will prove the main result of this paper -- a link between Hamiltonian simulation and quantum position-verification.
Before delving into that result, we will first introduce the basics of Hamiltonian simulation. 
In Hamiltonian simulation the goal is to engineer a simulator Hamiltonian $\Hsim$ to reproduce some aspect of a target Hamiltonian $\Htarget$ up to a controllable error.
There are a number of definitions for Hamiltonian simulation in the literature.
In order to compare the lower bounds on simulation  with the best known simulation techniques we need to pick a particular definition of simulation.
Here we choose to study the definition from \cite{Cubitt2019}, as this is the definition of simulation where errors are best understood.

Any simulation of $\Htarget$ by $\Hsim$ must involve encoding $\Htarget$ in $\Hsim$ in some way.
In \cite{Cubitt2019} it was shown that any encoding map $\mathcal{E}(\cdot)$ which satisfies basic requirements to be a good simulation must be of the form:
\begin{equation}
\mathcal{E}(A) = V \left(A \otimes P + \overline{A}\otimes Q \right)V^\dagger 
\end{equation}
where $V$ is an isometry, $P$, $Q$ are orthogonal projectors onto some ancillary systems and the overline denotes complex conjugation.

We will be interested in simulations of spin-lattice models, described by local-Hamiltonians where the target system can be decomposed as $\mathcal{H}_\textrm{target} = \otimes_{i=1}^n \mathcal{H}_i$.
We say an encoding is \textit{local} if the simulator system can be decomposed as $\mathcal{H}_\textrm{sim} = \otimes_{i=1}^n \mathcal{H}_i'$ such that $\mathcal{H}_i'$ corresponds to $\mathcal{H}_i$ operationally. 
Encodings capture the idea of one Hamiltonian perfectly replicating the physics of another.
To allow for some controllable error we introduce the concept of an approximate simulation:

\begin{defn}[Approximate simulation \cite{Cubitt2019}]\label{sim def}
We say that $\Hsim$ is a $(\Delta, \eta, \epsilon)$-simulation of $\Htarget$ if there exists a local encoding $\mathcal{E}(M) = V \left(M\otimes P + \overline{M} \otimes Q \right)V^\dagger$ such that:
\begin{enumerate}[i.]
\item There exists an encoding $\tilde{\mathcal{E}}(M) = \tilde{V}\left(M \otimes P + \overline{M}\otimes Q \right)\tilde{V}^\dagger$ such that $\tilde{V}\identity\tilde{V}^\dagger$ is the projector into the low ($<\Delta$) energy subspace of $\Hsim$  and $\norm{\tilde{V}-V}_\infty\leq \eta$
\item $\norm{(\Hsim)_{\leq \Delta}-\tilde{\mathcal{E}}(\Htarget)}_\infty\leq \epsilon$
\end{enumerate}
We say that a family $\mathcal{F}'$ of Hamiltonians can simulate a family $\mathcal{F}$ of Hamiltonians if, for any $\Htarget \in \mathcal{F}$ and any $\eta, \epsilon > 0$ and $\Delta > \norm{\Htarget}$ there exists $\Hsim \in \mathcal{F}'$ such that $\Hsim$ is a $(\Delta, \epsilon, \eta)$-simulation of $\Htarget$.
\end{defn}

This definition has three parameters describing the approximation in the simulation.
The physics of the target system is encoded in the subspace of $\Hsim$ with energy less than $\Delta$.
The high energy subspace of the simulator Hamiltonian contributes inaccuracies to the physics observed; in good simulations $\Delta$ is taken to be large to minimise these effects.
$\eta$ describes the error in the eigenstates since the local encoding describing the simulation does not map perfectly into the low energy subspace but instead is close to a general encoding that does.
Finally $\epsilon$ describes the error in the eigenspectrum.
A simple consequence of (ii) is that the low energy spectrum of the simulator is $\epsilon$-close to the spectrum of the target.
Given small $\epsilon$ and $\eta$, the eigenspectrum and corresponding states of the target are well approximated by the simulator.
In \cite{Cubitt2019} it is shown that approximate simulations preserve important physical quantities up to controllable errors:

\begin{lemma}[{\cite[Lem.~27, Prop.~28, Prop.~29]{Cubitt2019}}] \label{physical-properties}
Let $H$ act on $(\field{C}^d)^{\otimes n}$.
  Let $H'$ act on $(\field{C}^{d'})^{\otimes m}$, such that $H'$ is a $(\Delta, \eta, \epsilon)$-simulation of $H$ with corresponding local encoding $\mathcal{E}(M) = V(M \otimes P + \overline{M} \otimes Q)V^\dagger$.
  Let $p = \rank(P)$ and $q = \rank(Q)$.
  Then the following holds true.
  \begin{enumerate}[i.]
  \item Denoting with $\lambda_i(H)$ (resp.\ $\lambda_i(H')$) the $i$\textsuperscript{th}-smallest eigenvalue of $H$ (resp.\ $H'$), then for all $1 \leq i \leq d^n$, and all $(i-1)(p+q) \leq j \leq i (p+q)$, $|\lambda_i(H) - \lambda_j(H')| \leq \epsilon$.
  \item The relative error in the partition function evaluated at $\beta$ satisfies
    \begin{equation}
      \frac{|\mathcal{Z}_{H'}(\beta) - (p+q)\mathcal{Z}_H(\beta) |}{(p+q)\mathcal{Z}_H(\beta)} \leq \frac{(d')^m \ee^{-\beta \Delta}}{(p+q)d^n \ee^{-\beta \|H\|}} + (\ee^{\epsilon \beta} - 1).
    \end{equation}
  \item For any density matrix $\rho'$ in the encoded subspace for which $\mathcal{E}(\identity)\rho' = \rho'$, we have
    \begin{equation}
      \|\ee^{-\ii H't}\rho'\ee^{\ii H't} - \ee^{-\ii \mathcal{E}(H)t}\rho'\ee^{\ii \mathcal{E}(H)t}\|_1 \leq 2\epsilon t + 4\eta.
    \end{equation}
  \end{enumerate}
\end{lemma}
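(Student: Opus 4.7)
The plan is to prove the three parts in sequence, each leveraging the two quantitative inputs from \cref{sim def}: the operator-norm bound $\|(\Hsim)_{\leq \Delta}-\tilde{\mathcal{E}}(\Htarget)\|_\infty \leq \epsilon$ on the simulator's low-energy block, and the isometry closeness $\|\tilde{V}-V\|_\infty \leq \eta$. A preliminary observation, used throughout, is that the spectrum of $\tilde{\mathcal{E}}(\Htarget) = \tilde{V}(\Htarget \otimes P + \overline{\Htarget}\otimes Q)\tilde{V}^\dagger$ consists of each eigenvalue $\lambda_i(\Htarget)$ with multiplicity exactly $p+q$: indeed $\overline{\Htarget}$ has the same spectrum as $\Htarget$, and $\tilde{V}$ being an isometry preserves the spectrum on its image.

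For part (i), I would then apply Weyl's inequality to the $\epsilon$-perturbation between $(\Hsim)_{\leq \Delta}$ and $\tilde{\mathcal{E}}(\Htarget)$. Since both are self-adjoint operators on the same $d^n(p+q)$-dimensional low-energy subspace, sorting the spectra ascendingly and pairing the $j$-th eigenvalue of $\Hsim$ (for $j \in [(i-1)(p+q), i(p+q)]$) with the $i$-th eigenvalue of $\Htarget$ yields the claimed indexed bound. For part (ii), I would split $\mathcal{Z}_{H'}(\beta)$ into a tail over eigenvalues above $\Delta$ and a sum over those below. The tail contains at most $(d')^m$ terms each bounded by $\ee^{-\beta\Delta}$; comparing to the natural lower bound $\mathcal{Z}_H(\beta) \geq d^n \ee^{-\beta\|\Htarget\|}$ produces the first term of the stated estimate. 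The low-energy contribution lies in $[\ee^{-\beta\epsilon},\ee^{\beta\epsilon}]\cdot(p+q)\mathcal{Z}_H(\beta)$ by pointwise comparison of the Boltzmann weights via (i), which contributes the $\ee^{\beta\epsilon}-1$ term after normalising by $(p+q)\mathcal{Z}_H(\beta)$.

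For part (iii), I would insert $\tilde{\mathcal{E}}(\Htarget)$ as an intermediate via the triangle inequality. The first leg, between $\ee^{-\ii H't}\rho'\ee^{\ii H't}$ and $\ee^{-\ii \tilde{\mathcal{E}}(\Htarget)t}\rho'\ee^{\ii \tilde{\mathcal{E}}(\Htarget)t}$, is handled by observing that $\rho' = \mathcal{E}(\identity)\rho'$ places $\rho'$ within $\eta$ of the low-energy subspace of $\Hsim$ (since $\|V-\tilde{V}\|_\infty \leq \eta$), so I may replace $H'$ by $(H')_{\leq \Delta}$ when acting on this state and then apply the Lipschitz estimate $\|\ee^{-\ii At}-\ee^{-\ii Bt}\|_\infty \leq t\|A-B\|_\infty$ on each side of $\rho'$ with $A=(H')_{\leq\Delta}$, $B=\tilde{\mathcal{E}}(\Htarget)$ to pick up $2\epsilon t$. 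The second leg only differs by swapping $\tilde{V}$ for $V$ in the conjugation, contributing at most $2\eta$ in trace norm after the two factors around $\rho'$. The main obstacle is the bookkeeping in this last part: tracking how the $\eta$-error from the isometry mismatch combines \emph{additively} (rather than multiplicatively) with the $\epsilon t$-error from the Lipschitz step, and making precise the sense in which $\rho'$ is supported in the low-energy subspace so that leakage into the high-energy block does not spoil the clean form of the bound. Parts (i) and (ii) are by contrast essentially direct corollaries of Weyl's inequality and a geometric split of the partition sum.
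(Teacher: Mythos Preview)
The paper does not prove this lemma; it is quoted verbatim from \cite{Cubitt2019} (Lem.~27, Prop.~28, Prop.~29 there) and used as a black box. So there is no ``paper's own proof'' to compare against.

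That said, your outline is sound and tracks the standard argument from \cite{Cubitt2019}. The preliminary observation about the spectrum of $\tilde{\mathcal{E}}(\Htarget)$ and the use of Weyl's inequality for (i) is exactly right, and the high/low-energy split of $\mathcal{Z}_{H'}$ for (ii) is the correct mechanism. For (iii) your triangle-inequality strategy with $\tilde{\mathcal{E}}(\Htarget)$ as intermediate is also the right one; the only point where you should be a bit more careful than your sketch indicates is that $\rho'$ satisfies $\mathcal{E}(\identity)\rho'=\rho'$ and hence lies \emph{exactly} in the image of $V$, not of $\tilde{V}$. One therefore first replaces $\rho'$ by its projection onto the genuine low-energy subspace $\tilde{V}\tilde{V}^\dagger$ (costing $O(\eta)$ in trace norm, since $\|V-\tilde V\|\le\eta$), applies the Duhamel/Lipschitz bound there to pick up $2\epsilon t$, and then undoes the projection. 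The swap from $\tilde{V}$ back to $V$ in the second leg contributes the remaining $O(\eta)$; the constants indeed aggregate to $2\epsilon t + 4\eta$. You have correctly identified this bookkeeping as the only non-mechanical part.
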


Families of Hamiltonians which are able to simulate arbitrary Hamiltonians are called \textit{universal}:

\begin{defn}[{Universal Hamiltonians~\cite[Def.~26]{Cubitt2019}}]
We say that a family of Hamiltonians is a universal simulator---or simply is universal---if any (finite-dimensional) Hamiltonian can be simulated by a Hamiltonian from the family.
\end{defn}

As we will see in the next section, a very good universal simulator for breaking QPV could simulate arbitrary target Hamiltonians for long times, with small errors, without blowing up the norm of the Hamiltonian.\footnote{These requirements would also make a very good simulator for practical applications, with the additional requirement that there it is important that the number of spins in the simulator system doesn't increase too much.}
We can formalise this as:
\begin{defn}[Very good universal simulators]\label{def: very good simulators}
We say that a universal family of $O(1)$-local Hamiltonians $\mathcal{M}$ is a very good universal simulator if for any target Hamiltonian $\Htarget$ acting on $n$ spins with locality $\mathcal{L}$ there exists a simulator Hamiltonian $\Hsim = \sum_x h_x \in \mathcal{M}$ which is a $(\Delta,\epsilon,\eta)$-simulation of $\Htarget$ where:
\begin{equation}
\max_i{\norm{h_i}} = \poly\left( \frac{\mathcal{L}^a}{\epsilon^b} \norm{\Htarget}, \frac{\mathcal{L}^x}{\epsilon^y \eta^z}\norm{\Htarget}\right)
\end{equation}
where $a+2b \leq 1$ and $x+2y + z\leq 1.$
\end{defn}

We can also consider loosening the above definition, to the case where we are only interested in target Hamiltonians that belong to some particular family of Hamiltonians:

\begin{defn}[Very good $\mathcal{M}_\textrm{target}$ simulators]\label{def: very good simulators 2}
We say that a universal family of $O(1)$-local Hamiltonians $\mathcal{M}$ is a very good $\mathcal{M}_\textrm{target}$ simulator if for a target Hamiltonian $\Htarget \in \mathcal{M}_\textrm{target}$ acting on $n$ spins with locality $\mathcal{L}$ and norm $\norm{\Htarget} =O\left( \frac{1}{n}\right)$ there exists a simulator Hamiltonian $\Hsim = \sum_x h_x \in \mathcal{M}$ which is a $(\Delta,\epsilon,\eta)$-simulation of $\Htarget$ where:
\begin{equation}
\max_i{\norm{h_i}} = \poly\left( \frac{\mathcal{L}^a}{\epsilon^b} \norm{\Htarget}, \frac{\mathcal{L}^x}{\epsilon^y \eta^z}\norm{\Htarget}\right)
\end{equation}
where $a+2b \leq 1$ and $x+2y + z\leq 1$.
\end{defn}

\subsection{QPV attack from causal toy models}\label{sect main result}

To construct an attack on QPV we will set up an asymptotic quantum task in the causal toy model with the features described in \cref{sect intro asymp task}.
The task is defined by a number of input $\{i_x\}$ and output $\{o_x\}$ points on the boundary of the toy model, a time $T_\textup{task}$ and a unitary $U$.
The task is completed successfully if quantum systems that are input at positions $\{i_x\}$ at time $t=0$ are acted on by the unitary $U$ and received at positions $\{o_x\}$ at time $T_\textup{task}$.

We will design the configuration in such a way that the unitary $U$ can be applied locally in the bulk, but cannot be applied locally in the boundary.
The set up is shown in \cref{fg time limit} for (2+1)D and (3+1)D systems.
The input and output points in both cases are on extreme points of the boundary.
This ensures that in the bulk the fastest way to carry out the task is to bring the input quantum systems to the centre of the bulk, apply the unitary, and send them to the output points. 
The time taken for a signal to travel from the boundary into the centre of the bulk and out to a (possibly different) boundary point is given by $Cn\tau^R$ for some constant $C$.
Applying the bulk unitary will take some non-zero time so we require $T_\textup{task} > Cn\tau^R$ in order to apply the unitary locally in the bulk.
For the configuration shown in \cref{fg time limit} it takes time $\frac{5}{4}Cn\tau^R$ to bring the input quantum systems together and send them to the output quantum systems if we restrict our protocol to only using the boundary.
Therefore, to ensure that there is no local boundary protocol for completing the task we require that $T_\textup{task} < \frac{5}{4}Cn\tau^R$ 
This means, we can allow bulk computations which run for time $t_U < \frac{1}{4}Cn\tau^R$ so that there is a local bulk procedure but only non-local boundary procedures.

\begin{figure}[h!]
\centering
\includegraphics[trim={0cm 0cm 0cm 0cm},clip,scale=0.45]{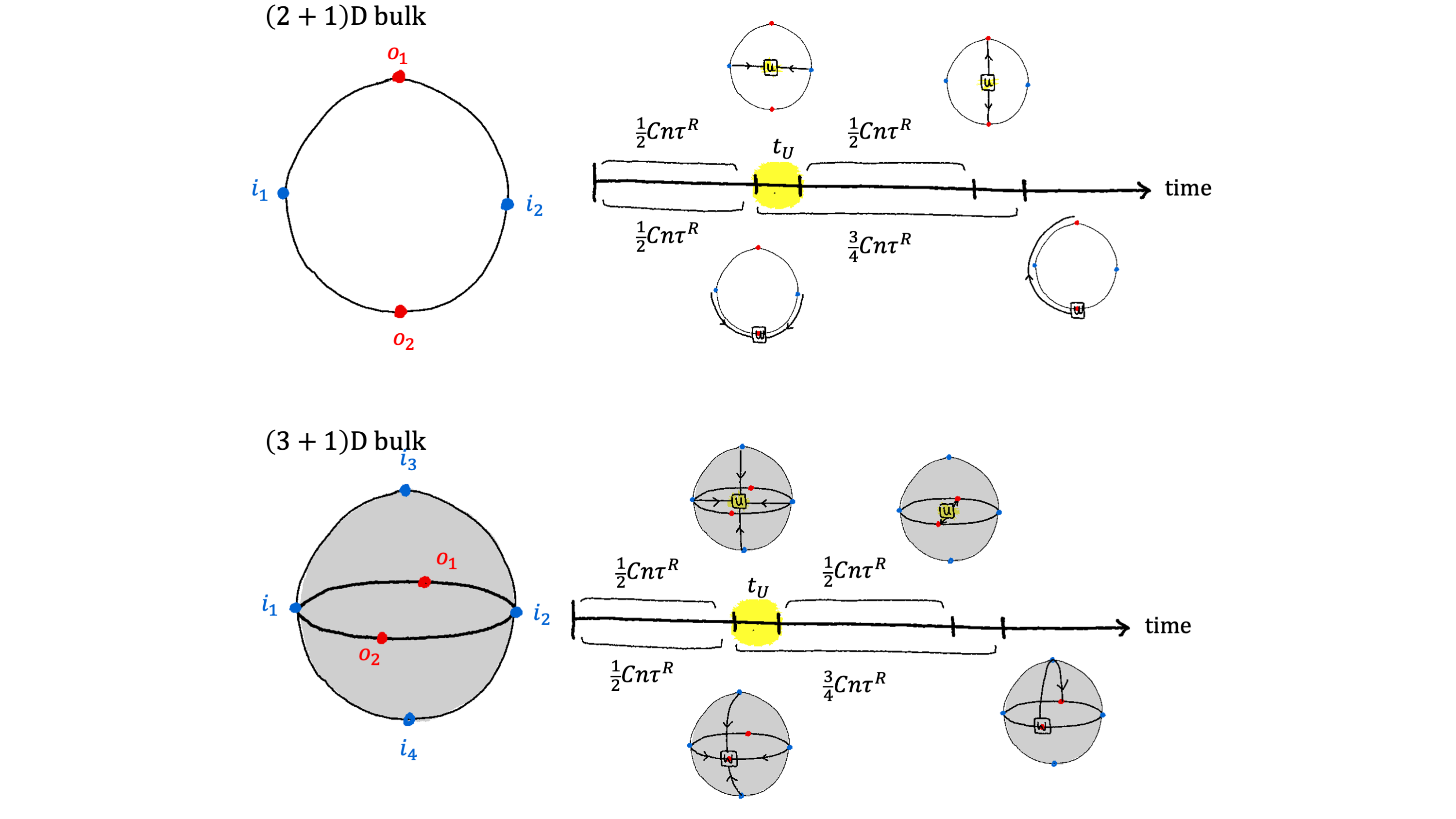}
\caption{The time for the local bulk computation is limited to $\frac{1}{4}Cn\tau^R$ otherwise a local spacetime region exists on the boundary. Top: a spacelike slice of (2+1)D bulk with inputs and outputs positioned on extreme points in the boundary. The minimum time taken for a particle to travel through the bulk is $Cn\tau^R$. Whereas the minimum time for the inputs to come together on the boundary and then travel to the outputs is $\frac{5}{4}Cn\tau^R$. Therefore, in the bulk a local unitary can be applied on the inputs for time $t_U\leq \frac{1}{4}Cn\tau^R$ while even an instantaneous local boundary protocol is causally impossible. Bottom: the same argument depicted with a (3+1)D bulk.}
\label{fg time limit}
\end{figure}

By the end of this section we will have proven the following result:
\begin{reptheorem}{them PBQC sim}[QPV and simulation]
If QPV is secure against upper bounded linear attacks (\cref{def: attacks}) then very good universal simulators (\cref{def: very good simulators}) cannot exist.
\end{reptheorem}

\subsubsection{Entanglement in the boundary} \label{sec:entanglement}

We will first demonstrate that the implementation of the $n$-qubit unitary through the non-local boundary protocol relies on a limited entanglement resource, specifically the boundary state.

\begin{figure}[h!]
\centering
\includegraphics[trim={0cm 0cm 0cm 0cm},clip,scale=0.35]{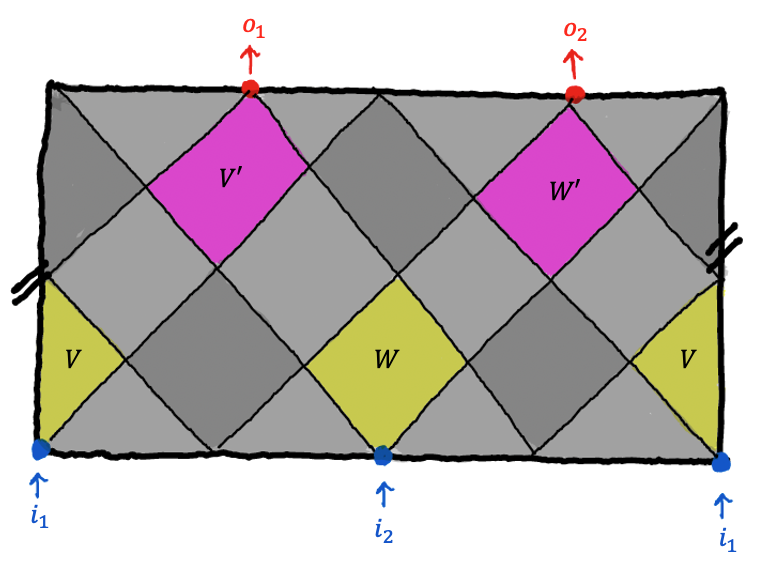}
\caption{Entanglement regions of interest on the boundary. The boundary regions are defined as follows: $V$ is the intersection of the forward lightcone of $i_1$ and the backwards lightcones of both outputs; $W$ is the intersection of the forward lightcone of $i_2$ and the backwards lightcones of both outputs; $V$ is the intersection of the backward lightcone of $o_1$ and the forward lightcones of both inputs; $W'$ is the intersection of the backward lightcone of $i_1$ and the forward lightcones of both inputs.}
\label{fg: boundary entanglement}
\end{figure}

As illustrated in~\cref{fg: boundary entanglement}, due to causality constraints the boundary protocol can be restricted to two rounds of operations. 
The boundary regions $V$, $W$, $V'$ and $W'$ are defined in~\cref{fg: boundary entanglement} by the inputs/outputs they are causally connected to.
The first round of operations take place in $V$ and $W$ separately and a second round of operations in $V'$ and $W'$ with communication allowed in between. 
Therefore, the amount of entanglement used in the boundary protocol can be captured by the mutual information between the two pairs of boundary regions $I(V:W)$ and $I(V':W')$.
By virtue of time reversal symmetry in the specific geometry of the inputs and outputs considered here these quantities are equal.
We will show that in the protocol where a local unitary on $n$ qubits is performed in the bulk, that the mutual information between these regions is upper bounded by
\begin{equation}\label{eqn upper bound}
I(V:W) \leq \frac{1}{2}C_1 n,
\end{equation}
for some constant $C_1\geq 4\log(2)R $.

Mutual information is given by the difference in von-Neumann entropies,
\begin{equation}\label{eqn: mutual info}
I(V:W) = S(\rho_V) + S(\rho_W) - S(\rho_{V\cup W}).
\end{equation}
Hence in order to upper bound the mutual information we need to both upper and lower bound the entropy of a boundary subregion. 
\cite{Happy} section 4 details entropy bounds for perfect tensor networks.

First we rephrase a general tensor network result to make use of subsequently:
\begin{lemma}[General tensor network state entanglement upper bound]\label{lm: entanglement upper bound}
Given a tensor network state, $\ket{\psi}$ in $\mathcal{H}$ and a subspace associated with some tensor legs $\mathcal{H}_A \subset \mathcal{H}$.
The entanglement entropy is upper bounded by
\begin{equation}
S(\rho_A) \leq \min_{\gamma_A}  \log d_i ,
\end{equation}
where $\gamma_A$ is a cut through the tensor network separating $\mathcal{H}_A$ and $\mathcal{H}/\mathcal{H}_A$ that defines a decomposition of the tensor network into a contract of two tensors $P$ and $Q$ such that,
\begin{equation}
\ket{\psi} = \sum_{a,b,i} P_{ai}Q_{ib} \ket{a}\ket{b},
\end{equation}
with $\ket{a}$, $\ket{b}$ an orthogonal basis of $\mathcal{H}_A$ and $\mathcal{H}/\mathcal{H}_A$ respectively.
$d_i$ is the Hilbert space dimension associated with the contraction between $P$ and $Q$.
\end{lemma}

\begin{proof}
The reduced density matrix on subsystem $\mathcal{H}_A$ is given by,
\begin{align}
\rho_A &= \trace_{\mathcal{H}/\mathcal{H}_A} \left[\ket{\psi}\bra{\psi} \right] \\
& = \trace_{\mathcal{H}/\mathcal{H}_A} \left[ \sum_{a,b,i,a',b',i'} P_{ai}Q_{ib}(Q_{i'b'})^\dagger (P_{a'i'})^\dagger \ket{a}\bra{a'} \otimes \ket{b}\bra{b'}\right]\\
& = \sum_{a,b,i,a',b',i'} P_{ai}Q_{ib}\overline{Q}_{b'i'}\overline{P}_{i'a'} \ket{a}\bra{a'} \otimes \sum_w \braket{w|b}\braket{b'|w}\\
& =  \sum_{a,b,i,a',i'} P_{ai}Q_{ib}\overline{Q}_{bi'}\overline{P}_{i'a'} \ket{a}\bra{a'}\label{eqn rank rhoA},
\end{align}
where $\ket{a}$ is an orthonormal basis for $\mathcal{H}_A$, $\ket{b}$ and $\ket{w}$ are orthonormal basis for $\mathcal{H}/\mathcal{H}_A$ and overline denotes taking the complex conjugate. 
The Schmidt rank of $\rho_A$ is the number of strictly positive coefficients, $\alpha_j$, in the Schmidt decomposition, $\rho_A = \sum_j \alpha_j \ket{a}\bra{a}$.
From \cref{eqn rank rhoA} one can see that the rank of $\rho_A$ is upper bounded by the number of terms in the sum over $i$,
\begin{equation}
\textup{Rank}[\rho_A] \leq d_i.
\end{equation}
The von Neumann entropy is upper bounded by log of the rank, hence for all choices of $\gamma_A$,
\begin{equation}
S(\rho_A) \leq d_i.
\end{equation}
To obtain the tightest upper bound minimise over the different choices of $\gamma_A$.
\end{proof}

Next we discuss entropy lower bounds.
A lower bound was given in \cite{Happy} for product bulk states or holographic states.
However, since an entangled bulk state can only generate additional contributions to entanglement, the same lower bound is valid. 
The result is given in terms of greedy geodesics.
 
\begin{defn}[Greedy algorithm: see \cite{Happy}]\label{defn: greedy}
Start with a cut through a tensor network. 
One step of the algorithm consists of identifying a tensor for which the current cut crosses at least half of its legs. 
The update then moves the cut to include this tensor.
This procedure is iterated until arriving at the local minimum cut where adding or removing any single tensor does not reduce the cut length. 
\end{defn}

Figure 7 of \cite{Happy} demonstrates that the greedy algorithm can fail to find matching minimal geodesics starting from a boundary region $\gamma^\star_X$ and the complementary boundary region $\gamma^\star_{X^c}$. 
For example, in a model with negative curvature, the algorithm leaves residual bulk regions between $\gamma^\star_X$ and $\gamma^\star_{X^c}$ for disconnected regions or connected regions with free bulk indices. 

\begin{lemma}[Boundary entropy lower bound: see Theorem 3 of \cite{Happy}]\label{lm: entanglement lower bound}
Given a holographic quantum error correcting code contracted from perfect tensors with bond dimension $\nu$.
For a holographic state or code with arbitrary bulk state contracted, if $A$ is a (not-necessarily-connected) boundary region and $A^c$ is its complement, then the entropy of the reduced boundary state on $A$ satisfies,
\begin{equation}
\log(\nu)\abs{\gamma_A^\star \cap \gamma_{A^c}^\star} \leq S(\rho_A),
\end{equation}
where $\gamma_A^\star$ is the greedy geodesic obtained by applying the greedy algorithm to $A$ and $\gamma_{A^c}^\star$ is the greedy geodesic obtained by applying the greedy algorithm to $A^c$.
\end{lemma}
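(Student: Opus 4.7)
The plan is to follow the strategy of Theorem 3 in \cite{Happy}, which exploits the isometry property of perfect tensors together with the greedy algorithm, and then argue that adding bulk entanglement cannot decrease the bound obtained in the product-bulk-state case. Throughout I keep track of the bipartition of the boundary into $A$ and $A^c$, and the two greedy geodesics $\gamma_A^\star$, $\gamma_{A^c}^\star$ obtained by running the greedy algorithm of \cref{defn: greedy} from the two sides.

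First, I would fix a product bulk state $\ket{\Psi}_{\textup{bulk}} = \bigotimes_x \ket{\psi_x}$ and unpack the isometry structure. A perfect tensor with $2k$ legs is, by definition, a unitary between any choice of $k$ incoming and $k$ outgoing legs, hence an isometry from any minority-leg set to the complementary majority-leg set. Every step of the greedy algorithm absorbs a tensor at least half of whose legs are already crossed by the current cut, so it defines an isometry that pushes the state supported on the previous cut plus the new tensor's bulk leg onto the new cut. Composing these steps produces an isometry $V_A$ from $\mathcal{H}_A \otimes \mathcal{H}_{\textup{bulk}, a}$ (where $\mathcal{H}_{\textup{bulk}, a}$ denotes the bulk dangling legs inside the greedy region of $A$) onto a subspace of the Hilbert space on $\gamma_A^\star$. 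Since isometries preserve entropy, $S(\rho_A)$ equals the entropy of the corresponding reduced state on $\gamma_A^\star$ after the fixed product bulk state has been fed in on $\mathcal{H}_{\textup{bulk}, a}$. Running the same argument from $A^c$ produces an analogous isometry $V_{A^c}$ with image on $\gamma_{A^c}^\star$.

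Next I would use the two isometric descriptions to localise the entanglement. Because $\ket{\Phi}$ is pure, $S(\rho_A)=S(\rho_{A^c})$; the two pushed representations agree on the shared bonds $\gamma_A^\star \cap \gamma_{A^c}^\star$, while the legs in $\gamma_A^\star \setminus \gamma_{A^c}^\star$ and $\gamma_{A^c}^\star \setminus \gamma_A^\star$ lie on opposite sides of the final bipartition. The Schmidt rank of the global state across the cut defined by $\gamma_A^\star \cap \gamma_{A^c}^\star$ is therefore bounded below by $\nu^{|\gamma_A^\star \cap \gamma_{A^c}^\star|}$, simply because these bond indices are summed over and can be varied independently on the two sides after the pushing. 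This yields
\begin{equation*}
S(\rho_A) \;\geq\; \log(\nu)\,|\gamma_A^\star \cap \gamma_{A^c}^\star|
\end{equation*}
in the product-bulk-state case, reproducing the HaPPY result.

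Finally I would lift the bound to arbitrary bulk states. Purify $\ket{\Psi}_{\textup{bulk}}$ on an auxiliary system $E$ into a product state $\ket{\tilde\Psi}_{\textup{bulk},E}$, assign $E$ to the $A^c$ side of the bipartition, and apply the previous argument to the enlarged tensor network; the greedy geodesics and the isometries $V_A$, $V_{A^c}$ are unchanged because $E$ is attached only through the bulk legs. This gives $S(\rho_{A}\otimes \mathbb{1}_E) \geq \log(\nu)\,|\gamma_A^\star\cap\gamma_{A^c}^\star|$. Tracing out $E$ moves its contribution entirely into the $A^c$ side of the cut, so by the Araki--Lieb inequality (or equivalently by the fact that the reduced state on $A$ is unchanged whether $E$ is kept or traced out) the same lower bound survives for $S(\rho_A)$. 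The main obstacle is the bookkeeping in this last step: one must assign $E$ consistently to $A^c$ and verify that the isometric pushing of $A$'s support is unaffected by the presence of bulk entanglement, so that the $\gamma_A^\star \cap \gamma_{A^c}^\star$ bonds continue to witness at least $\log(\nu)\,|\gamma_A^\star \cap \gamma_{A^c}^\star|$ bits of entropy across the $A$/$A^c$ partition.
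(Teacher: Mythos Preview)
The paper itself does not give a proof---it simply refers the reader to Section~4.2 of \cite{Happy}. Your plan tracks the structure of that argument, so at the level of overall strategy you are aligned with what the paper intends.

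There is, however, a genuine gap in your second step. You assert that the Schmidt rank across the bipartition is at least $\nu^{|\gamma_A^\star \cap \gamma_{A^c}^\star|}$ and deduce the entropy lower bound from this. A lower bound on Schmidt rank does \emph{not} give a lower bound on von~Neumann entropy; Schmidt rank bounds entropy from above, not below. What the HaPPY argument actually establishes is stronger: after pushing via the isometries $V_A$ and $V_{A^c}$, each bond in $\gamma_A^\star \cap \gamma_{A^c}^\star$ carries a \emph{maximally entangled} pair (this is what contracting a tensor-network bond literally is), and the global pure state factorises as a tensor product of these maximally entangled pairs with whatever state lives on the residual region between the two greedy geodesics. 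The entropy across the $A/A^c$ cut then picks up exactly $\log\nu$ from each intersection bond, and non-negativity of the residual contribution gives the inequality. You need this maximal-mixedness structure, not merely a rank count.

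Your third step also needs repair. Purifying an entangled bulk state on an auxiliary system $E$ does not yield a product state; the purification is pure but remains entangled across the $a/a^c$ partition, so the ``product-bulk-state'' case is not recovered. The cleaner observation (which the paper's surrounding text makes) is that the lower-bound argument requires only that the global boundary state be pure and that the bonds in $\gamma_A^\star \cap \gamma_{A^c}^\star$ be maximally entangled pairs; neither depends on the bulk input being product. Bulk entanglement can only add to $S(\rho_A)$, since the isometric push carries the (possibly mixed) bulk-$a$ marginal onto $\gamma_A^\star$, and its entropy contributes non-negatively on top of the intersection-bond term.
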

\begin{proof}
See \cite{Happy} section 4.2.
\end{proof}

By \cref{lm: entanglement upper bound} and \cref{lm: entanglement lower bound} the entropy of a boundary subregion $X$ with the bulk state with entanglement $S_\textup{bulk}$ is upper and lower bounded by,
\begin{equation}\label{eqn: entropy bounds}
\log(\nu) \abs{\gamma_X^\star \cap \gamma_{X^c}^\star} \leq S(\rho_X) \leq \log(\nu) \abs{\gamma_X} .
\end{equation}
Substituting appropriately from \cref{eqn: entropy bounds} into \cref{eqn: mutual info} gives the following upper bound on the mutual information,
\begin{equation}\label{eqn: upper 1}
I(V:W) \leq \log(\nu) \left[\abs{\gamma_V} + \abs{\gamma_W} - \abs{\gamma_{V\cup W}^\star \cap \gamma_{(V\cup W)^c}^\star } \right] .
\end{equation}

\begin{figure}[h!]
\centering
\includegraphics[trim={0cm 0cm 0cm 0cm},clip,scale=0.4]{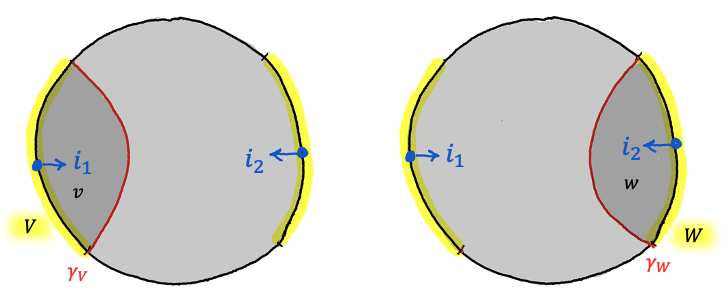}
\caption{Bulk subregions containing only one input/output have bulk entropy upper bounded by $\floor{n/2}\log(2)$. An equivalent picture exists for $V'$ and $W'$. }
\label{fg: vwbulk}
\end{figure}

We can use information about our protocol to quantify these contributions to the bounds. 
The quantities $\abs{\gamma_V} ,\abs{\gamma_W}, \abs{\gamma_{V\cup W}^\star \cap \gamma_{(V\cup W)^c}^\star } $ appearing in \cref{eqn: upper 1} are all lengths of cuts through the tensor network. 
In our network each "tensor leg" actually consists of a bundle of $n$ qubits, hence all lengths are clearly proportional to $n$.  

In the case where the greedy algorithm fails maximally $ 0\leq \abs{\gamma_{V\cup W}^\star \cap \gamma_{(V\cup W)^c}^\star } $, in this case the lower bound in \cref{lm: entanglement lower bound} simply corresponds to entropy being a non-negative function. 
Hence, 
\begin{equation}
I(V:W) \leq \log(2)\left( \abs{\gamma_V} + \abs{\gamma_W} \right).
\end{equation}
Define $\abs{\Gamma_X} := \frac{1}{n}\abs{\gamma_X}$ as the number of index bundles crossed by cut $\gamma$, so that, 
\begin{equation}
I(V:W) \leq \log(2)\left( \abs{\Gamma_V} + \abs{\Gamma_W}  \right)n.
\end{equation}
Therefore:
\begin{equation}
I(V:W) \leq \frac{1}{2}C_1 n,
\end{equation}
for some constant independent of $n$, $C_1 = 2\log(2)\left( \abs{\Gamma_V} + \abs{\Gamma_W} \right)$, and equivalently for $V'$ and $W'$.
The cuts through the network can be loosely upper bounded by $\abs{\Gamma_V},\abs{\Gamma_W} \leq 2R$ - since this cut goes to the centre of the bulk and returns to the boundary, therefore
\begin{equation}
I(V:W) \leq \log(2)2Rn.
\end{equation}
The total entanglement for the non-local protocol is then $I(V:W)  + I(V':W') \leq C_1n $ with $C_1\geq 4\log(2)R>1$.
Note that since the two rounds concern different sets of boundary legs the total amount of entanglement in the pre-shared boundary state between the two adversarial parties also includes the mutual information in the second round.  

\subsubsection{Proof of main theorem}

We are now in a position to prove the main result. 

\begin{theorem}[QPV and simulation] \label{them PBQC sim}
If QPV is secure against upper bounded linear attacks (\cref{def: attacks}) then very good simulators (\cref{def: very good simulators}) cannot exist.
\end{theorem}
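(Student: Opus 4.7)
The plan is to prove the contrapositive: assuming a very good universal simulator exists, I would construct a PBQC attack using only linear entanglement, contradicting the hypothesized security.

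First, I would fix an arbitrary PBQC protocol with local unitary $U$ on $n$ qubits and embed it in the holographic scattering configuration from \cref{fg time limit}, placing the inputs/outputs on the extreme boundary points so that the bulk admits a local implementation of $U$ but the boundary does not admit any local implementation within $T_\textup{task} \in (Cn\tau^R, \frac{5}{4}Cn\tau^R)$. In the causal toy model, $U$ is generated by a geometrically 2-local bulk Hamiltonian $\Hbulk$ whose term norms obey the scaling $\norm{h_x} = O(\tau^{x-R})$ of \cref{eqn ham scaling}, run for time $\tu \leq \frac{1}{4} C n\tau^R$.

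Next, I would push this bulk evolution through the holographic dictionary supplied by the tensor network toy model to obtain a boundary target Hamiltonian $\Htarget$ acting on (an isometric image of) the boundary Hilbert space whose dynamics reproduce the bulk effect of $U$ on the boundary state. Feeding $\Htarget$ into the assumed very good universal simulator $\mathcal{M}$ returns $\Hsim \in \mathcal{M}$ that is a $(\Delta, \eta, \epsilon)$-simulation of $\Htarget$ with, by \cref{def: very good simulators}, $\max_i \norm{h_i}$ bounded polynomially subject to $a+b \le 1$ and $x+y+z \le 1$. Choosing $\epsilon$ and $\eta$ so that the simulation error $2\epsilon T_\textup{task} + 4\eta$ from \cref{physical-properties}(iii) stays well below the PBQC verifier's tolerance, and exploiting the very-good exponents, $\Hsim$ has local interaction strength small enough for the adversaries to evolve under it locally within the time window $T_\textup{task}$.

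The attack itself is then exactly the non-local boundary protocol of \cref{fg: boundary entanglement}: the two adversaries pre-share the encoded boundary state restricted to $V$ and $W$ (and the corresponding $V'$, $W'$ for the second round), run $\Hsim$ locally on their respective sides with one round of communication in between, and forward the outputs to the verifiers. The analysis of \cref{sec:entanglement} shows that the mutual information of the pre-shared state satisfies $I(V:W) + I(V':W') \leq C_1 n$ with $C_1$ independent of $n$, which is precisely an upper bounded linear attack in the sense of \cref{def: attacks}. The attack therefore implements $U$ within the PBQC time constraint using only linear entanglement, contradicting the assumed security and yielding \cref{them PBQC sim}.

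The main obstacle will be the simultaneous bookkeeping of three competing scales: the boundary target norm $\norm{\Htarget}$ arising from the holographic translation of the exponentially-scaled bulk Hamiltonian; the precision $(\epsilon,\eta)$ needed over the long boundary time $T_\textup{task} \sim n\tau^R$; and the locality overhead $\mathcal{L}$ introduced by the simulator. One must verify that the exponent conditions $a+b\le 1$, $x+y+z\le 1$ in \cref{def: very good simulators} are strong enough to keep $\max_i \norm{h_i}$ at a level at which the simulator evolution fits inside the boundary causal diamond while preserving the linear-in-$n$ entanglement budget. A secondary technical subtlety is that $\Htarget$ is only isometric to the toy-model boundary Hamiltonian, so one must check that the mutual information bound derived in \cref{sec:entanglement} transfers faithfully through the simulator's local encoding $\mathcal{E}$.
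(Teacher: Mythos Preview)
Your contrapositive scaffolding matches the paper, but there is a structural conflation that would make the argument fail as written. The paper uses \emph{two different} bulk Hamiltonians. The geometrically 2-local $\Hbulk$ with the $\tau^{x-R}$ scaling is used only for \emph{transport}: it is a sequence of SWAP terms ferrying the two $n/2$-qubit inputs to the centre (and later the outputs back out). Pushed through the tensor network, this transport Hamiltonian is non-local on each boundary half but, crucially, \emph{non-interacting between the two halves} (each input is wholly reconstructible on its own side, so by no-cloning nothing about it leaks to the other side). Hence no simulation is needed for transport; each adversary simply runs the pushed-through evolution locally and it is automatically causal. Your claim that the arbitrary $n$-qubit unitary $U$ itself ``is generated by a geometrically 2-local bulk Hamiltonian $\Hbulk$ whose term norms obey~\cref{eqn ham scaling}'' is incorrect: a 2-local term of norm $O(\tau^{-R})$ run for time $O(n\tau^R)$ cannot produce an arbitrary $n$-qubit unitary.

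What actually requires the very-good simulator is only the central step. There the paper writes $U=e^{i\Htarget t}$ with $\Htarget$ the (fully $n$-local) principal matrix logarithm of $U$ rescaled so that $\|\Htarget\|=\tau^{-R}/n$ and $t\le n\tau^R$; pushing \emph{this single term} through the isometry yields an $n\tau^R$-local boundary target of the same norm. The bookkeeping you defer is then carried out explicitly: with $\mathcal{L}=n\tau^R$, $\epsilon=O(\tau^{-R}/n)$, $\eta=(n\tau^R)^{-1}$ one gets $\tfrac{\mathcal{L}^a}{\epsilon^b}\|\Htarget\|=(n\tau^R)^{a+b-1}=O(1)$ and $\tfrac{\mathcal{L}^x}{\epsilon^y\eta^z}\|\Htarget\|=(n\tau^R)^{x+y+z-1}=O(1)$, so $\max_i\|h_i\|=O(1)$ and the simulator evolution has bounded Lieb--Robinson speed, letting the whole protocol be confined to $V,W,V',W'$ and inherit the $C_1n$ mutual-information bound. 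Without the transport/computation split and this explicit parameter choice, your sketch does not close.
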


\begin{proof}
Let $\mathcal{A}$ be the statement that quantum position verification is secure against upper bounded linear attacks.
Let $\mathcal{B}$ be the statement that very good universal simulators exist.
The above theorem is then summarised by $\mathcal{A}\implies \neg\mathcal{B}$.
Proof by contrapositive, we will show that if very good universal simulators do exist ($\mathcal{B}$) then we construct a non-local attack on the QPV protocol implementing a general unitary on $n$ qubits using at most $C_1n$ entanglement ($\neg \mathcal{A}$).

Consider a tensor network toy model of AdS/CFT with a (2+1)D bulk and a (1+1)D boundary as outlined in the previous section.
The tensors are chosen to be perfect tensors and each tensor "leg" is a bundle of $n$ qubits.
Set up two input points, $i_1,i_2$, at opposite points on the boundary, and two output points $o_1,o_2$ at spatial positions rotated by $\frac{\pi}{2}$ (see \cref{fig:pbqc}).

The honest protocol in the language of QPV will take place in the bulk. 
We initialise two quantum states on the bulk degrees of freedom directly adjacent to the  $i_1,i_2$ locations on the boundary.
Each input state is on $\frac{n}{2}$ qubits so can be initialised on half the bundle of indices that make up a single bulk tensor leg.
These input states are then sent into the centre of the bulk by applying a Hamiltonian that generates the SWAP operator between the bulk degree of freedom carrying the input state, and an adjacent bulk tensor leg which is one step closer to the centre of the tensor network (depending on the choice of tensor network there may not be a unique choice of adjacent bulk tensor leg - this is unimportant for us, any choice will lead to an equivalent protocol).
Note that this Hamiltonian is time dependent -- the interaction between tensor legs at depth $r$ and depth $r-1$ are only turned on when the input state reaches the tensor at depth $r$.
Each SWAP operation takes time $O(n\tau^{R-r})$ where $r$ is the distance from the centre of the tensor network and $\tau$ is a constant associated with the choice of hyperbolic honeycombing in the tensor network.
Once the two input states reach the central bulk tensor we apply a Hamiltonian that generates the desired local unitary $U$.
To ensure that the boundary dynamics for this evolution gives a protocol for non-local quantum computation we require that the Hamiltonian generating the bulk unitary is applied for time at most $t_U < \frac{C}{4}n\tau^{R}$ to maintain the impossibility of a local unitary in the boundary (see \cref{fg time limit}).
We then send the output states to the bulk tensor legs directly adjacent to the $o_1,o_2$ locations on the boundary.
As before, the process of sending the states to the boundary is achieved by applying a time dependent Hamiltonian which generates the SWAP operator between the bulk degree of freedom carrying the output state, and an adjacent bulk tensor leg which is one step closer to the output point (again there may be more than one choice of shortest path from the bulk central tensor to the output point - any choice of shortest path will do).

Before we present the dishonest protocol in terms of QPV language we first examine the boundary dynamics that are dual to the bulk process outlined above. 
Since the tensor network is an error correcting code, when examining the bulk protocol on the boundary there is a choice about which portions of the boundary are used to reconstruct bulk operators.
When the inputs are being sent into the centre of the bulk we adopt the `natural' choice of pushing directly out towards the boundary -- this means we are pushing out to a non-contiguous boundary region consisting of two portions, one centred on $i_1$ and the other centred on $i_2$
When the inputs reach the centre of the bulk, for the first half of the time that $U$ is being applied we continue to push out towards the same non-contiguous boundary region that was being used for the final SWAP.
For the second half of the implementation of $U$ we push out to a different non-contiguous boundary region.
It is again made up of two portions, this time one centred on $o_1$, the other on $o_2$. 
When the signal is being sent out from the bulk to the output points we again adopt the `natural' choice of pushing directly out towards the boundary towards regions centred on $o_1$ and $o_2$.

When we simply push the bulk operators out to the boundary using the tensor network we find that the resulting boundary operators are highly non-local.
To construct local boundary operators we will apply a very good simulation (\cref{def: very good simulators}, which exists by assumption) to the non-local boundary dynamics outlined above. 
Recall from \cref{physical-properties} that the error in a time evolved state after a simulation is upper bounded by $2\epsilon t + \eta$.
We apply a number of simulations on the boundary -- we simulate the boundary dynamics dual to each SWAP operation in the bulk and the boundary dynamics dual to the unitary $U$ in the bulk. 
Since these boundary operations happen sequentially, the error from one simulation feeds into the next, and the total error in the final boundary state is given by:
\begin{equation}
E = \sum_{r=0}^R (2\epsilon^{\textrm{in}}_r t_r + \eta^{\textrm{in}}_r)  + 2\epsilon_U t_U + \eta_U + \sum_{r=0}^R (2\epsilon^{\textrm{out}}_r t_r + \eta^{\textrm{out}}_r) 
\end{equation}
where $t_r = O(n\tau^{R-r})$ is the time taken for the SWAP operation between tensor bulk legs at $r$ and $r-1$, $t_U < \frac{1}{4}n\tau^R$ is the time the central bulk unitary is applied for, $\epsilon_U, \eta_U$ are the errors in the simulation of the boundary dynamics dual to the central bulk unitary,  $\epsilon^{\textrm{in}}_r, \eta^{\textrm{in}}_r$ ($\epsilon^{\textrm{out}}_r, \eta^{\textrm{out}}_r$) denote the errors in the simulation of the boundary dynamics dual to the SWAP operation beiing applied between tensor bulk legs at $r$ and $r-1$ on the way in (on the way out).
We require that this total error is much less than one, we can achieve this by simulations that achieve error scalings given by:
\begin{equation}
\epsilon^{\textrm{in}}_r = \epsilon^{\textrm{out}}_r = \epsilon_r = \frac{\tau^{-2(r-R)}}{n^2} \textrm{\ \ \ , \ \ \ } \eta^{\textrm{in}}_r = \eta^{\textrm{out}}_r = \eta_r = \frac{\tau^{-{(R-r)}}}{n}  \textrm{\ \ \ , \ \ \ }  \epsilon_U = \epsilon_R \textrm{\ \ \ , \ \ \ }  \eta_U = \eta_R
\end{equation}
This gives a total error scaling as:
\begin{equation}
E = O\left(\frac{1}{n}\right)
\end{equation}
which can be made arbitrarily small by increasing $n$.
By assumption we can achieve a boundary interaction dual to the bulk interaction at tensor layer $r$, with the boundary interaction strength scaling as:
\begin{equation}
\norm{h_r} = \poly\left(\frac{\mathcal{L}^a}{\epsilon^b}\norm{H_r},\frac{\mathcal{L}^x}{\epsilon^y\eta^z}\norm{H_r} \right)
\end{equation}
where $a+2b \leq 1$ and $x+2y+z \leq 1$ and $\norm{H_r}$ is the interaction strength of the dual bulk interaction at tensor layer $r$.
Therefore we have that:
\begin{equation}
\norm{h_r} = \poly\left(\frac{(n\tau^{R-r})^a}{\left(\frac{\tau^{r-R}}{n}\right)^{2b}}\frac{\tau^{r-R}}{n},\frac{(n\tau^{R-r})^x}{\left(\frac{\tau^{r-R}}{n}\right)^{2y}\left(\frac{\tau^{r-R}}{n}\right)^z}\frac{\tau^{r-R}}{n} \right) = O(1).
\end{equation}
These error scalings achieve a causal boundary simulation, where the speed of information propagation in the boundary is shown in \cref{fg: boundary entanglement}.

Now we turn to look at the dishonest protocol in the language of QPV.
Even with the causal simulation outlined above the boundary protocol acts over the entire boundary.
However, the entire protocol can be very well approximated by adversaries holding only the tensor legs corresponding to regions $V,V',W,W'$ in \cref{fg: boundary entanglement} since the Lieb-Robinson speed of the boundary ensures that to good approximation these are the only regions which are causally connected to one input and both outputs, or vice versa.\footnote{This means that if we trace out the entire boundary except the regions $V,V',W,W'$ the final boundary state will be approximately the same as if we acted on the entire boundary.}
Therefore at the beginning of the protocol we will let our first attacker, Alice, hold the boundary state corresponding to the tensor legs $V \cup V' \setminus W$ and our second attacker hold the boundary state corresponding to the tensor legs $W \cup W' \setminus V$.
Due to constants in the $O(1)$ Lieb-Robinson velocity and finite size effects in the tenor network, the region $V \cup V' $ may have some overlap with $W \cup W' $.
In the first round of the attack Alice (Bob) applies the Hamiltonian terms resulting from the very good simulation of the boundary dynamics to the qubits in $V$ ($W$) up to and including the first half of the boundary dynamics dual to the central bulk unitary.
Alice then sends Bob any qubits in her possession which are in the region $W'$, and Bob sends Alice any qubits in his possession which are in the region $V'$ (these qubits would result from any overlap between the regions $V$ with $W'$ and $W$ with $V'$).\footnote{Note that in fact this communication step will overlap with the computation steps since any qubits which need to be sent are at the edges of the relevant regions and do not need to be acted on for the whole of the first or second round.} 
Alice (Bob) then applies the Hamiltonian terms resulting from the very good simulation of the boundary dynamics to the qubits in $V'$ ($W'$) starting from the second half of the boundary dynamics dual to the central bulk unitary, and finishing with the simulation terms resulting from the boundary dynamics dual to the SWAP operations that send the information back to the boundary.

As outlined in \cref{sec:entanglement} the entanglement between these regions is upper-bounded by $C_1n$ for some constant $C_1\geq4\log(2)R $.
Therefore, if very good simulations exist, such that $\mathcal{B}$ is true, we can construct a  non-local attack on quantum position-verification with an error in the final state of $\epsilon t + \eta \ll 1$ using at most linear entanglement by using the HQECC and Hamiltonian simulation techniques to transform the local bulk protocol into a non-local boundary one that implements the same task. 
The amount of quantum communication is also upper bounded by a linear and there's no classical communication so all resources are at most linear.
\end{proof}

\cref{them PBQC sim} holds for general unitaries.
However, we can derive a corollary by restricting the local unitary in the position-based protocol to those generated by certain families of Hamiltonians.
This demonstrates that even when considering a limited family of unitaries the same logic bounds the interaction strength of the simulator Hamiltonian:

\begin{corollary}[QPV and simulation] \label{them PBQC sim entire physics}
If every QPV that implements a unitary $U = e^{i H_U t}$ for $H_U \in \mathcal{M}_U$ is secure against upper bounded linear attacks (\cref{def: attacks}) then very good $\mathcal{M}_\textrm{target}$ simulators (\cref{def: very good simulators 2}) cannot exist, where $\mathcal{M}_\textrm{U}$ and $\mathcal{M}_\textrm{target}$ are related by an isometry.
\end{corollary}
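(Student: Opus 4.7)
The plan is to mimic the proof of \cref{them PBQC sim} almost verbatim, but restrict the class of bulk unitaries under consideration at every step. As before I proceed by contrapositive: assume a very good $\mathcal{M}_\textrm{target}$ simulator exists in the sense of \cref{def: very good simulators 2}, and use it to build an upper bounded linear attack on every PBQC protocol whose honest implementation is $U=\ee^{\ii H_U \tu}$ with $H_U \in \mathcal{M}_U$.

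First I set up the same $(2{+}1)$D asymptotic task used in the proof of \cref{them PBQC sim}: inputs $i_1,i_2$ and outputs $o_1,o_2$ placed at opposite extremes of the boundary, with the bulk unitary applied at the central tensor for time $\tu < \tfrac{C}{4}n\tau^{R}$ so that, as in \cref{fg time limit}, no local boundary region is available. The only modification is that $U$ is not arbitrary but is generated by some $H_U \in \mathcal{M}_U$. I then push $H_U$ through the HQECC to obtain the boundary generator $\Hboundary$. Since the bulk-to-boundary encoding is an isometry, the family of boundary generators produced in this way is precisely the isometric image of $\mathcal{M}_U$, which is how $\mathcal{M}_\textrm{target}$ is defined in the statement. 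The isometry also preserves operator norm, so choosing $\|H_U\| = \tau^{-R}/n$ gives $\|\Hboundary\| = O(1/n)$, matching the norm hypothesis in \cref{def: very good simulators 2}; and it leaves the locality estimate $\mathcal{L}=n\tau^{R}$ intact.

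Next I invoke the assumed very good $\mathcal{M}_\textrm{target}$ simulator on $\Hboundary$ with the same parameter choices as in the proof of \cref{them PBQC sim}: $\epsilon = O(\tau^{-R}/n)$ and $\eta = 1/(n\tau^R)$, so that $\epsilon \tu + \eta \ll 1$ and yet the constraint $a+b\le 1$, $x+y+z\le 1$ in \cref{def: very good simulators 2} forces $\max_i \|h_i\| = O(1)$. The rest of the argument — propagating the inputs/outputs through the tensor network using the ``natural'' direct push-out (which is causal under the butterfly velocities established in \cref{sect: Causal struct}), restricting attention to the boundary regions $V,W,V',W'$ of \cref{fg: boundary entanglement}, and bounding $I(V:W)+I(V':W')\le C_1 n$ via \cref{lm: entanglement upper bound} and \cref{lm: entanglement lower bound} — depends only on the geometry of the HQECC and on the fact that some unitary was implemented in the bulk, not on which unitary. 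It therefore carries over without change and produces a linear-entanglement attack on the PBQC protocol for $U=\ee^{\ii H_U \tu}$, contradicting the hypothesis.

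The main (and really the only) subtle point is fixing the correspondence between $\mathcal{M}_U$ and $\mathcal{M}_\textrm{target}$ cleanly enough that \cref{def: very good simulators 2} is literally applicable: one must pin down a single bulk-to-boundary isometry $\mathcal{V}$ of the HQECC and define $\mathcal{M}_\textrm{target} := \{\mathcal{V} H_U \mathcal{V}^\dagger : H_U \in \mathcal{M}_U\}$, after which the norm scaling $\|\Hboundary\|=O(1/n)$ and the locality $\mathcal{L}(N)=n\tau^{R}$ (with $N$ the boundary qubit count proportional to $m\tau^{R}$) drop out of the isometric image. Once this identification is in place the proof is a direct transcription of the proof of \cref{them PBQC sim}, with ``universal simulator'' replaced everywhere by ``$\mathcal{M}_\textrm{target}$ simulator''.
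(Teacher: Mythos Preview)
Your proposal is correct and follows essentially the same approach as the paper: the paper's proof is a one-liner stating that the corollary follows immediately from the proof of \cref{them PBQC sim} by restricting the bulk unitary to one generated by $H_U\in\mathcal{M}_U$ and observing that the resulting $\Htarget$ is the image of $H_U$ under the tensor-network isometry. You have simply spelled out that one-liner in more detail, including the explicit identification $\mathcal{M}_\textrm{target}=\{\mathcal{V}H_U\mathcal{V}^\dagger:H_U\in\mathcal{M}_U\}$ and the verification that the norm and locality hypotheses of \cref{def: very good simulators 2} are met.
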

\begin{proof}
This follows immediately from the proof of \cref{them PBQC sim}, where instead of allowing the unitary implemented in the bulk to be general, we require that is generated by a restricted family of Hamiltonians (e.g. $k$-local Hamiltonians). The form of the $\Htarget$ comes from pushing the Hamiltonian that generates the unitary through the tensor network, as outlined in the proof of \cref{them PBQC sim}.
\end{proof}
See \cref{discussion} for more discussion on these results.

\subsection{Best known simulation techniques}\label{sect: best known}

To assess the importance of the lower bound on resources required for simulation from \cref{them PBQC sim} we can compare it to the current best known simulation techniques.

In Appendix \ref{app:non perturbative} we modify the universal Hamiltonian construction of \cite{Kohler2020} to one where the interactions themselves depend on the Hamiltonian to be simulated (in a precise manner).
In this way we are able to exponentially improve on the required interaction strength in the simulator system.
However, for general Hamiltonians this only takes us from a doubly exponential scaling of interaction strength with $N$ to a single exponential scaling.
This is clearly significantly worse than the optimum scaling outlined in \cref{them PBQC sim}.
However, when restricting to sparse Hamiltonians we  are able to construct a family of 2-local Hamiltonians $H = \sum_i h_i$ which can simulate an arbitrary sparse Hamiltonian $\Htarget$ with interaction strength satisfying $\max_i{\norm{h_i}} = \left(\poly\left( \frac{N^a}{\epsilon^b} \norm{\Htarget}, \frac{N^x}{\epsilon^y \eta^z}\norm{\Htarget}\right)\right)$ where $a+2b>2\frac{1}{3}$ and $x+2y+z > 2\frac{1}{2}$.\footnote{For target Hamiltonians that are $O(1)$-sparse we can achieve $a+2b=2\frac{1}{3}$ and $x+2y+z = 2\frac{1}{2}$. For more general $\poly(N)$ sparse Hamiltonians we can achieve $a+2b>2\frac{1}{3}$ and $x+2y+z > 2\frac{1}{2}$ where the exact values depend on the degree of the polynomial.} 

Note that when considering the result about sparse Hamiltonians in the setting of \cref{them PBQC sim entire physics} it is $\Htarget$ which we require to be sparse. 
A sufficient condition for $\Htarget$ to be sparse is that the Hamiltonian generating the unitary in the QPV protocol have Pauli rank $O(\poly(n))$ (where the Pauli rank is the number of terms in the Pauli decomposition of the Hamiltonian).
This set of Hamiltonians includes, but is not limited to, $k$-local Hamiltonians for constant $k$.

Although the achievable simulations for sparse Hamiltonians are close to the bounds in \cref{them PBQC sim}, they still do not allow us to construct interesting attacks on QPV.
This is because in order to keep the part of the error that scales with time small we have to focus on very low norm bulk Hamiltonians.
However, this means that the unitary we are carrying out is close to the identity, and its simulation is swamped by the part of the error that is time-independent. 

The simulation techniques from Appendix \ref{app:non perturbative} are non-perturbative.
In \cref{sect pert const} we investigate perturbative simulation techniques. 
However, there are no regimes where these give bounds as good as the non-perturbative alternatives.
This may be because we have not optimised the perturbative techniques for cases where we are simulating Hamiltonians that have large locality but are sparse.
Optimising these types of simulation has been investigated and it has been suggested that they generically require  $\max_i{\norm{h_i}} = O\left(\norm{\Htarget}\poly\left(\frac{1}{\epsilon^k}\right)\right)$ to simulate a $k$-local $\Htarget$ with a 2-local $H = \sum_i h_i$ while keeping the error and the number of spins fixed \cite{Cao2017,Bravyi2008,Cao2018}.
As in the case of the non-perturbative techniques, any attack on QPV using these perturbative techniques would only be applicable for the case of a unitary that was close to trivial.

\section{Discussion}\label{discussion}

\subsection{Toy models of AdS/CFT}

The rescaling of the Hamiltonian in tensor network toy models of AdS/CFT gives toy models with more of the expected causal features of the duality.
However, the toy models outlined here are still lacking key features such as Lorentz invariance, and conformal symmetry on the boundary.
Understanding whether it is possible to incorporate those other features of AdS/CFT into toy models could help aid our understanding of which features of holography are fundamentally gravitational.
Moreover, the time dilation expected in AdS/CFT does not appear naturally in these types of toy models, the way it does in e.g. \cite{masanes2023discrete}.
Constructing toy models where time dilation appears naturally but which also have bulk degrees of freedom (which \cite{masanes2023discrete} does not) remains an open question and would be the ideal tool for understanding QPV protocols in holography in a toy setting.

\subsection{The connection between Hamiltonian simulation and QPV}

To build on our main theorem there are two key areas for future research.
The first is attempting to construct simulators which achieve the bounds in \cref{them PBQC sim}.
While doing this for completely general Hamiltonians is a daunting task, it may be possible to do for restricted families of Hamiltonians.
Indeed, for certain families of sparse Hamiltonians we can already get close to the bounds, although not close enough to construct interesting attacks on QPV (Appendix \ref{app:non perturbative}).
As highlighted by \cref{them PBQC sim entire physics}, constructing simulators that could achieve the bounds from \cref{them PBQC sim} for restricted families of target Hamiltonians would already demonstrate that there exist certain families of unitaries that cannot be used as the basis for QPV protocols that are secure against linear entanglement.
For example, the families of sparse Hamiltonians we construct simulations for in Appendix \ref{app:non perturbative} correspond to unitaries which are generated by $k$-local Hamiltonians.
Therefore, if we could close the gap between the results of Appendix \ref{app:non perturbative} and the bounds in \cref{them PBQC sim} we would demonstrate that unitaries which are generated by $k$-local Hamiltonians cannot be used as the basis for QPV protocols that are secure against linear entanglement.
On the other hand, if unitaries generated by $k$-local Hamiltonians can be used as the basis for QPV protocols which are secure against linear entanglement this immediately implies that the results in Appendix \ref{app:non perturbative} are close to optimal for simulating sparse Hamiltonians. 


We note that a recent work \cite{harley2023going} shows that for \emph{modular} simulations of a $k$-local Hamiltonian (i.e. simulations where each $k$-local term in $\Htarget$ is simulated independently by a gadget) the interaction strength of the simulator Hamiltonian can be lower bounded by $O(n^{\frac{1}{2}})$.
The no go result in \cite{harley2023going} does not apply to simulations where the whole of $\Htarget$ is simulated at once (e.g. history state simulations, see Appendix \ref{app:non perturbative}) and hence it does not rule out the `very good' simulations in \cref{them PBQC sim}.
They also propose a new definition of dynamical simulation that is weaker than \cref{sim def} in that it only requires simulating subsets of observables, rather than the entire state.
Within this definition they are able to circumvent their no go result by utilising repeated measurement of ancilla spins to construct a dynamical simulation of a $O(1)$-local Hamiltonian with $O(1)$ interaction strengths.
This technique relies on a Lieb-Robinson type argument to bound the evolution of a very local observable under a Hamiltonian acting on a larger system.
The condition that the Hamiltonian is geometrically $O(1)$-local is necessary in their error bound and therefore cannot be applied to a general Hamiltonian acting on $n$ qubits -- as is the case for the boundary Hamiltonian in the toy model -- to obtain a QPV attack.

There are simpler lower bounds on the interaction strengths required to simulate $n$ local Hamiltonians with geometrically $2$ local terms due to causality arguments.
We explore these in Appendix \ref{appen: LR}.
The bound is linear in $n$ and therefore weaker than that of \cref{them PBQC sim} due to the exponents ($a+b>1$).
Furthermore, allowing an error in the state does not alter the scaling -- a key feature of this result. 
We leave it to future work to investigate whether causality requirements may produce new simulation lower bounds.

If there are states with no long-range entanglement in the simulator subspace then the conditional restrictions on `very good simulations' can be shown unconditionally via a  Lieb-Robinson argument.
However, this argument cannot be applied when there is long-range entanglement, as is the case for the position-based verification set up with $O(n)$ entanglement between opposite sides of the boundary. 
It could be possible to simulate a signalling Hamiltonian with a non-signalling Hamiltonian if the large norm difference between operators -- which is guaranteed to exist due to Lieb-Robinson bounds -- originates from outside the simulator subspace. 

\section*{Acknowledgements}
The authors would like to thank Alex May and Kfir Dolev for insightful discussions.
H.\,A. ~is supported by EPSRC DTP Grant Reference: EP/N509577/1 and EP/T517793/1.
P.\,H. is supported by AFOSR (award FA9550-19-1-0369), DOE (Q-NEXT), CIFAR and the Simons Foundation
T.\,K. is supported by the DOE QuantISED grant DE-SC0020360, SLAC (Q-NEXT) and QFARM.
D.\,P.\,G. and T.\,K. also acknowledges financial support from the Spanish Ministry of Science and Innovation (“Severo Ochoa Programme for Centres of Excellence in R\&D” CEX2019-000904-S and grant PID2020-113523GB-I00).
This work has been supported in part by the EPSRC Prosperity Partnership in Quantum Software for Simulation and Modelling (grant EP/S005021/1), and by the UK Hub in Quantum Computing and Simulation, part of the UK National Quantum Technologies Programme with funding from UKRI EPSRC (grant EP/T001062/1).
The work has been financially supported by Universidad Complutense de Madrid (grant FEI-EU-22-06), by the Ministry of Economic Affairs and Digital Transformation of the Spanish Government through the QUANTUM ENIA project call - QUANTUM SPAIN project, and by the European Union through the Recovery, Transformation and Resilience Plan - NextGenerationEU within the framework of the Digital Spain 2026 Agenda. 
This research was supported in part by Perimeter Institute for Theoretical Physics. Research at Perimeter Institute is supported by the Government of Canada through the Department of Innovation, Science, and Economic Development, and by the Province of Ontario through the Ministry of Colleges and Universities. 

\begin{appendices}
\addtocontents{toc}{\protect\setcounter{tocdepth}{1}}

\section{An explicit non-perturbative construction of an attack on QPV} \label{app:non perturbative}

In \cite{Kohler2020} a history-state method for Hamiltonian simulation was outlined.
This simulation technique leverages the ability to encode computation into ground states of local Hamiltonians via computational history states.
A computational history state $\ket{\Phi}_{CQ} \in \mathcal{H}_C \otimes \mathcal{H}_Q$ is a state of the form
\begin{equation}
\ket{\Phi}_{CQ} = \frac{1}{\sqrt{T}} \sum_{t=1}^{T} \ket{t}\ket{\psi_t},
\end{equation}
where $\{\ket{t}\}$ is an orthonormal basis for $\mathcal{H}_C$ and $\ket{\psi_t} = \Pi_{i=1}^tU_i\ket{\psi_0}$ for some initial state $\ket{\psi_0}\in \mathcal{H}_Q$ and set of unitaries $U_i \in \mathcal{B}(\mathcal{H}_Q)$.

$\mathcal{H}_C$ is the clock register and $\mathcal{H}_Q$ is  the computational register.
If $U_t$ is the unitary transformation corresponding to the $t$\textsuperscript{th} step of a quantum computation (in any model of computation, e.g. quantum circuits, quantum Turing machines, or more exotic models of computation), then $\ket{\psi_t}$ is the state of the computation after $t$ steps.
The history state $\ket{\Phi}_{CQ}$ then encodes the evolution of the quantum computation.
History state Hamiltonians are constructed so that their ground states are computational history states. 
In \cite{Kohler2020} a method is derived to construct universal Hamiltonians via history state methods.
Here we use a slight modification of that method\footnote{In \cite{Kohler2020} the aim was to minimise the number of parameters needed to describe a universal quantum simulators. Here we are instead interested in minimising the norm of the simulator interactions. For that reason, instead of encoding the target Hamiltonian in a feature of the simulator system, we instead encode it directly into the interactions.}.

We construct a history state Hamiltonian, $\Hpe$, which encodes the Turing machine model of computation.
The input to the Turing machine is a set of `physical spins' -- the state of these spins is left arbitrary; they encode the state of the system being simulated.
The computation being encoded is a long period of `idling' (i.e. doing nothing), followed by quantum phase estimation applied to the physical spins with respect to the unitary generated by $\Htarget$.
The output of the $\Hpe$ computation is the energy of the physical spins with respect to the Hamiltonian $\Htarget$.
This energy is written to the Turing machine tape in the form $E = \norm{\Htarget} (m_\alpha \sqrt{2}-m_\beta)$ by writing $m_\alpha$ $\alpha$ symbols and $m_\beta$ $\beta$ symbols to the tape.

The simulator Hamiltonian is then given by:
\begin{equation}\label{eq:first order sim}
\Hsim = J \Hpe + T \norm{\Htarget} \sum_{i=0}^{N-1}\left(\sqrt{2}\Pi_\alpha^{(i)} -\Pi_\beta^{(i)} \right)
\end{equation}
where $T$ is the total number of time steps in the computation encoded by $\Hpe$, $N$ is the number of spins in the simulator system, $\Pi_\alpha^{(i)}$ and $\Pi_\beta^{(i)}$ are the projectors onto the $\ket{\alpha}$ and $\ket{\beta}$ basis states applied to the $i^{\textrm{th}}$ spin, and $J$ is some large value which pushes all states which have non-zero energy with respect to $\Hpe$ to high energies.
To understand why this Hamiltonian gives a simulation of $\Htarget$ note that $\Hpe$ has a degenerate zero energy ground space, that is spanned by history states.
These history states encode the phase estimation algorithm applied to different states of the physical spins.
All other states have very high energies, due to the $J$ factor.
The projectors in $\Hsim$ break the ground space degeneracy of $\Hpe$ by giving the appropriate energy factors the $\ket{\alpha}$ and $\ket{\beta}$ spins which encode the output of the phase estimation algorithm.

It follows immediately from \cite[Lemma 4.3]{Kohler2020} that the Hamiltonian in \cref{eq:first order sim} is a $(\Delta,\epsilon,\eta)$-simulation of $\Htarget$ with: 
\begin{equation}\label{eq:eta}
\eta = O\left(\frac{T^3 \norm{\Htarget}}{J} + \frac{\Tpe}{T}\right)
\end{equation}
\begin{equation}
\epsilon = \epsilon' + O\left(\frac{T^4 \norm{\Htarget}^2}{J}\right)
\end{equation}
\begin{equation}
\Delta = \frac{J}{2T^2}
\end{equation}
where $J$ and $T$ are as defined in \cref{eq:first order sim}, $\Tpe$ is the time for the phase estimation part of the history state computation (i.e. $\Tpe = T-L$ for $L$ the number of idling steps), and $\epsilon'$ is the error in the phase estimation computation.

Therefore we will need to select $L$ so that $T \geq \frac{\Tpe}{\eta}$, giving:
\begin{equation}
J \geq \Delta T^2  + \frac{T^3 \norm{\Htarget}}{\eta} + \frac{T^4 \norm{\Htarget}}{\epsilon}.
\end{equation}

To compare this with the bound in \cref{them PBQC sim} we need to determine the time needed for the phase estimation algorithm.

\subsection{Quantum Phase Estimation computation time}

The QPE algorithm requires us to implement the unitary $U = e^{i \Htarget \tu}$ where $\norm{\Htarget \tu} \leq 2\pi$.
If we want to estimate the eigenvalues of $\Htarget$ to accuracy $\epsilon'$ we need to determine the phase of $U$ to accuracy $\frac{\epsilon'}{\norm{\Htarget}}$  which requires $\frac{\norm{\Htarget}}{\epsilon'}$ implementations of $U$. 

Using \cite{Berry:2015} we can implement a unitary $U$ generated by a $d$-sparse Hamiltonian which acts on $N$ spins to within accuracy $\epsilon''$ in time
\begin{equation}
T_{U} = O\left(\kappa \left(1 + N + \log^{\frac{5}{2}}\left(\frac{\kappa}{\epsilon''}\right)\right)    \frac{\log(\frac{\kappa}{\epsilon''})}{\log \log(\frac{\kappa}{\epsilon''})} \right)
\end{equation}
where $\kappa = d \norm{\Htarget}_{\max}\tu$.
Here $\norm{\Htarget}_{\max}$ denotes the largest entry of $\Htarget$ in absolute value.
We will choose $\tu$ so that $\norm{\Htarget}_{\max}\tu= O(1)$, and set  $\epsilon'' = \frac{\epsilon'}{\norm{\Htarget}}$ giving 
\begin{equation}
T_{U} = O\left(d \left(1 + N + \log^{\frac{5}{2}}\left(\frac{d \norm{\Htarget}}{\epsilon'}\right)\right)    \frac{\log(\frac{d\norm{\Htarget}}{\epsilon'})}{\log \log(\frac{d\norm{\Htarget}}{\epsilon'})} \right).
\end{equation}

Therefore we have that:
\begin{equation}\label{eqn TPE}
\Tpe = \frac{\norm{\Htarget}}{\epsilon'} T_{U} = O\left(\frac{d\norm{\Htarget}}{\epsilon'} \left(1 + N + \log^{\frac{5}{2}}\left(\frac{d\norm{\Htarget}}{\epsilon'}\right)\right)   
 \frac{\log(\frac{d\norm{\Htarget}}{\epsilon'})}{\log \log(\frac{d\norm{\Htarget}}{\epsilon'})} \right) .
\end{equation}
 
 \subsection{General simulation bounds}\label{app:sim bounds}

From \cref{eq:eta} we have that,
\begin{equation}
\eta \geq \frac{\Tpe}{T} = \frac{\Tpe}{\Tpe+L}.
\end{equation}
This follows since $\eta$ is at least as large as the second term in \cref{eq:eta}, and the total time for the computation is given by the time for the phase estimation algorithm plus the idling time.

So we can set $L = \lambda \Tpe$ for some large constant $\lambda$. 
This gives $T = O(\Tpe)$.

This gives:
\begin{equation}
\eta =  O\left(\frac{\norm{\Htarget}^4}{J}  \left(\frac{d}{\epsilon'} \left(N + \log^{\frac{5}{2}}\left(\frac{d\norm{\Htarget}}{\epsilon'}\right) \right)   \frac{\log(\frac{d\norm{\Htarget}}{\epsilon'})}{\log \log(\frac{d\norm{\Htarget}}{\epsilon'})} \right)^3 + \frac{1}{m}\right) 
\end{equation}
\begin{equation}
\epsilon =   O\left( \epsilon' +\frac{\norm{\Htarget}^{6}}{J}  \left(\frac{d}{\epsilon'} \left(N + \log^{\frac{5}{2}}\left(\frac{d\norm{\Htarget}}{\epsilon'}\right)\right)  \frac{\log(\frac{d\norm{\Htarget}}{\epsilon'})}{\log \log(\frac{d\norm{\Htarget}}{\epsilon'})} \right)^4 \right)
\end{equation}
\begin{equation}
\Delta= O\left( \frac{J}{\left(\frac{d \norm{\Htarget}}{\epsilon'} \left( N + \log^{\frac{5}{2}}\left(\frac{d\norm{\Htarget}}{\epsilon'}\right)\right)    \frac{\log(\frac{d\norm{\Htarget}}{\epsilon'})}{\log \log(\frac{d\norm{\Htarget}}{\epsilon'})} \right)^2 }   \right).
\end{equation}

Therefore:
\begin{equation}\label{eq:J}
J = \poly\left(\left(\frac{N^4d^4}{\epsilon^5}\right)^{\frac{1}{6}} \norm{\Htarget} ,\left(\frac{N^3d^3}{\eta \epsilon^3}\right)^{\frac{1}{4}} \norm{\Htarget},d^2 \Delta, \log(\frac{d\norm{\Htarget}}{\epsilon}), \frac{1}{\log(\log(\frac{d\norm{\Htarget}}{\epsilon}))} \right).
\end{equation}
where we have used that $\epsilon'$ is upper bounded by $\epsilon$. 
Therefore, for $d=O(1)$ this gives a simulation method where $\max_i{\norm{h_i}} = \left(\poly\left( \frac{N^a}{\epsilon^b} \norm{\Htarget}, \frac{N^x}{\epsilon^y \eta^z}\norm{\Htarget}\right)\right)$ where $a+2b = 2\frac{1}{3}$ and $x+2y +z = 2\frac{1}{2}$.
For polynomially sparse Hamiltonians we have that $a+2b > 2\frac{1}{3}$ and $x+2y +z> 2\frac{1}{2}$, where the tightness of the lower bounds depends on the degree of the polynomial in the sparsity. 
While these are close to the bounds from \cref{them PBQC sim} for the case of sparse Hamiltonians, for general Hamiltonians the $d$-factor in \cref{eq:J} contributes a factor which is exponential in $N$, and results in a scaling that is far from that in \cref{them PBQC sim}. 

\subsection{Holographic Quantum Error Correcting Codes using this construction}
In this section we investigate using these simulation techniques in the attack on QPV outlined in the proof of \cref{them PBQC sim} for general Hamiltonians. 
In this case we have $N = n\tau^{R}$.
The Pauli rank of the operator to be simulated as at most $4^n$ so $d=O(4^n)$.
Let us set $\norm{\Htarget} = \tau^{-\alpha R}4^{-\alpha n}$ and $\epsilon' = \tau^{-\beta R}4^{-\beta n}$ where we require $\beta \geq \alpha > 0$ to ensure that the accuracy with which the unitary $U = e^{i\Htarget \tu}$ is implemented is small with respect to $\norm{\Htarget}$.
We find:
\begin{equation}
\begin{multlined}
\Tpe = O \left(n4^{(\beta-\alpha + 1)n} \tau^{(\beta-\alpha)R}\left(n\tau^R + \left((\beta+\alpha+1)n \right)^{5/2} + \left((\beta-\alpha)R\log \tau \right)^{5/2}\right)\times\right.\\
\left. \frac{(\beta-\alpha+1)n+(\beta-\alpha)R\log\tau}{\log((\beta+1-\alpha)n+(\beta-\alpha)R\log\tau)} \right).
\end{multlined}
\end{equation}

For concreteness let us choose $\beta = \alpha$:
\begin{align}
\Tpe =& O \left(\frac{4^{n} (n^2\tau^{R} +n^{7/2})}{\log(n)} \right).
\end{align}

Giving:
\begin{equation}
\eta = O \left(\frac{4^{(3-\alpha)n}\tau^{-\alpha R}(n^6 \tau^{3R}+n^{21/2})}{J \log^3(n)}\right)
\end{equation}
\begin{equation}
\epsilon= O \left(\frac{4^{2(2-\alpha)n}\tau^{-2\alpha R}(n^8\tau^{4R}+n^14)}{J \log^4(n)}\right)
\end{equation}
\begin{equation}
\Delta =O\left( \frac{J \log^2(n)}{4^{2n}(n^2\tau^{2R}+n^7)} \right).
\end{equation}

Therefore if we set $J=O(1)$ as required for causality in the toy model, then we can construct a `good' simulation in terms of $\epsilon$ and $\Delta$ (i.e. a simulation where $\epsilon \ll \norm{\Htarget}$ and $\Delta \gg \norm{\Htarget}$) if we choose $\alpha >4$.
For concreteness we will set $\alpha=5$. 
This gives:
\begin{equation}
\eta = O \left(\frac{4^{-2n}\tau^{-5R}(n^6\tau^{3R}+n^{21/2})}{\log^3(n)}\right)
\end{equation}
\begin{equation}
\epsilon= O \left(\frac{4^{6n}\tau^{-10R}(n^8\tau^{4R}+n^{14})}{\log^4(n)}\right)
\end{equation}
\begin{equation}
\Delta =O\left( \frac{\log^2(n)}{4^{2n}(n^2\tau^{2R}+n^7)} \right).
\end{equation}

However, recall that in order to maintain the link to QPV we can only evolve the system for time $\frac{1}{4}Cn\tau^R$, so in this set-up we can only perform a unitary with ``computation budget'' at most $\norm{\Hbulk}t = n 4^{-5n}\tau^{-4R}$. 
From \cref{physical-properties} the error in the state at the end of the simulation would be: 
\begin{equation}
\begin{multlined}
      \|\ee^{-\ii \Hsim t}\rho'\ee^{\ii \Hsim t} - \ee^{-\ii \mathcal{E}(\Htarget)t}\rho'\ee^{\ii \mathcal{E}(\Htarget)t}\|_1 \\
      \leq O\left(\frac{4^{6n}\tau^{-9R}(n^7 \tau^{3R} + n^{23/2})}{\log^4(n)}+ \frac{4^{-2n}\tau^{-5R}(n^6\tau^{3R}+n^{21/2})}{\log^3(n)}\right).
      \end{multlined}
      \end{equation}
      
We see that while the $\epsilon$-error is small (and can be made arbitrarily so by reducing the norm of $\Htarget$), we cannot achieve a error in $\eta$ which is small with respect to the computation budget without increasing $J$.
While the $\eta$ error does not accumulate with time it nevertheless dominates the error for the small computation budget we are considering here, and this error swamps the simulation.

\subsection{Holographic Quantum Error Correcting Codes using this construction for a restricted class of bulk Hamiltonians}

In this section we will examine the results of applying the simulation techniques from \cref{app:sim bounds} in a HQECC set up for bulk Hamiltonians which are $k$-local for constant $k$.
This guarantees that the target Hamiltonian on the boundary of the HQECC is sparse, therefore from \cref{app:sim bounds} we know that we can get very close to saturating the definition of very good simulators.
However, in this section we will see that despite getting very close to the bounds, the simulations do not provide interesting attacks on position based verification because the errors in the simulation dominate the computation being carried out.

We still have that $N = n\tau^{R}$, and now we have $d=O(1)$.
Let us set $\norm{\Htarget} = \tau^{-\alpha R}n^{-\alpha' k}$ and $\epsilon' = \tau^{-\beta R}n^{-\beta' k}$ where we require $\beta \geq \alpha > 0$ and $\beta' \geq \alpha' > 0$ to ensure that the accuracy with which the unitary $U = e^{i\Htarget \tu}$ is implemented is small with respect to $\norm{\Htarget}$.
Substituting the above into \cref{eqn TPE} we find:
\begin{equation}
\begin{multlined}
\Tpe = O \left(n^{(\beta' -\alpha'+1)k + 1} \tau^{(\beta-\alpha)R}\left(n\tau^R + \left((\beta'-\alpha'+1)k\log n \right)^{5/2}+  \left((\beta-\alpha)k\log \tau \right)^{5/2} \right) \times\right.\\
\left.\frac{(\beta'-\alpha'+1)\log(n)+(\beta-\alpha)R\log(\tau)}{\log((\beta'-\alpha'+1)\log(n)+(\beta-\alpha)R\log(\tau))} \right).
\end{multlined}
\end{equation}

For concreteness let us choose $\beta = \alpha$ and $\beta' = \alpha'$:
\begin{align}
\Tpe &= O \left(\frac{n^{k}(n\tau^R + \log^{5/2}n)\log(n) }{\log\log(n)} \right)\\
& =O \left(\frac{n^{k+1}\tau^R \log(n) }{\log\log(n)} \right) .
\end{align}

\cref{eq:J} now implies the following scalings on the simulation parameters:
\begin{equation}
\eta = O \left(\frac{n^{(3-\alpha')k + 3}\tau^{(3-\alpha)R}\log^3(n)}{J \log^3(\log(n))}\right)
\end{equation}
\begin{equation}
\epsilon= O \left(\frac{n^{2(2-\alpha')k + 4}\tau^{2(2-\alpha)R}\log^4(n)}{J \log^4(\log(n))}\right)
\end{equation}
\begin{equation}
\Delta =O\left( \frac{J\log^2(\log(n))}{n^{2(k+1)}\tau^{2R}\log^2(n)} \right).
\end{equation}

Therefore if we set $J=O(1)$ as required for causality in the toy model, and assume $k=2$ for concreteness, then we can construct a `good' simulation in terms of $\epsilon$ and $\Delta$ (i.e. a simulation where $\epsilon \ll \norm{\Htarget}$ and $\Delta \gg \norm{\Htarget}$) if we choose $\alpha' >6$ and $\alpha > 4$.
For concreteness we will set $\alpha'=7$ and $\alpha=5$. 
This gives:
\begin{equation}
\eta = O \left(\frac{n^{-5}\tau^{-2R}\log^3(n)}{\log^3(\log(n))}\right)
\end{equation}
\begin{equation}
\epsilon= O \left(\frac{n^{-16}\tau^{-6R}\log^4(n)}{\log^4(\log(n))}\right)
\end{equation}
\begin{equation}
\Delta =O\left( \frac{\log^2(\log(n))}{n^6 \tau^{2R}\log^2(n)} \right).
\end{equation}

However, recall that in order to maintain the link to QPV we can only evolve the system for time $\frac{1}{4}Cn\tau^R$, so in this set-up we can only perform a unitary with ``computation budget'' at most $\norm{\Hbulk}t = n^{-13}\tau^{-4R}$. 
From \cref{physical-properties} the error in the state at the end of the simulation would be: 
\begin{equation}
\begin{multlined}
      \|\ee^{-\ii \Hsim t}\rho'\ee^{\ii \Hsim t} - \ee^{-\ii \mathcal{E}(\Htarget)t}\rho'\ee^{\ii \mathcal{E}(\Htarget)t}\|_1 \\
      \leq O\left(\frac{n^{-15}\tau^{-5R}\log^4(n)}{\log^4(\log(n))} +  \frac{n^{-5}\tau^{-2R}\log^3(n)}{\log^3(\log(n))}\right).
      \end{multlined}
      \end{equation}
      
As in the previous section the $\epsilon$ error is small with respect to the computation budget (and can be made arbitrarily smaller by reducing $\Htarget$), however we cannot achieve a error in $\eta$ which is small with respect to the computation budget without increasing $J$.
While the $\eta$ error does not accumulate with time it nevertheless dominates the error for the small computation budget we are considering here, and this error swamps the simulation.

\section{Explicit perturbative construction of an attack on QPV}\label{sect pert const}

In this section we investigate the limits on the computational budget for the local unitary in the centre of the bulk, if perturbative simulation techniques are used to create an approximate boundary Hamiltonian that can implement the bulk computation non-locally. 

Using perturbative simulations gives worse bounds than the non-perturbative techniques described above, particularly in the case of sparse Hamiltonians.
However the perturbative techniques are not optimised for the cases where we are simulating Hamiltonians that have large locality but are sparse. 
It could be expected that perturbative simulations would perform worse since previous perturbative simulations generically require  $\max_i{\norm{h_i}} = O\left(\norm{H_{\textup{target}}}\textup{poly}\left(\frac{1}{\epsilon^k}\right)\right)$ to simulate a $k$-local $H_{\textup{target}}$ with a 2-local $H = \sum_i h_i$ while keeping the error and the number of spins fixed \cite{Cao2017,Bravyi2008,Cao2018}.
In the following section we give an explicit recipe for a perturbative simulation of the boundary Hamiltonian.
The interaction strengths of the simulator Hamiltonian is improved over the recipe followed in \cite{Kohler2019,Apel2021} by varying the `heavy' Hamiltonian norm\footnote{Perturbative simulations consist of a `heavy' Hamiltonian and a perturbation. When projected into the low energy subspace the heavy Hamiltonian ensures that the ancillary qubits are in a low energy state that with the perturbation facilitates the more complex interaction being simulated.}.

We want to simulate a $n$-local bulk Hamiltonian term acting on the central bulk tensor (which generates the general $n$ qubit unitary).
This has Pauli rank at most $4^n$.
The tensor network mapping gives a boundary Hamiltonian, $H_\textup{target}$, that is a perfect simulation of this bulk operator.
The norm and Pauli rank are preserved since the mapping is isometric and stabilizer, however, it is now $O(n\tau^R)$-local.  
The initial target Hamiltonian on the boundary before any simulation is therefore,
\begin{equation}
H_\textup{target} = b P_{O(n\tau^R)},
\end{equation}
where $b$ is a placeholder for the norm of the bulk term which we will later examine.

To recover a causal boundary Hamiltonian we construct an approximate simulation to $H_\textup{target}$ with reasonable interaction strengths that is $2$-local.
This section uses perturbative simulation techniques originating in complexity theory.
Reducing the locality employs the subdivision gadget \cref{sect: subdivision}.
Simulating a $n\tau^R$-local Hamiltonian with a $2$-local simulator requires $O(R\log(n))$ recursive applications of the subdivision gadget followed by the 3-to-2 gadget.

Since the initial interaction had Pauli rank $4^n$, this many recursive simulations are done in parallel to break down the full initial operator.
Hence the entire process introduces $O(4^n n \tau^R)$ ancilla qubits.
Once the locality is reduced there are additional gadgets to make the boundary Hamiltonian geometrically local, see proof of Theorem 6.10 from \cite{Kohler2019}.
However, reducing the locality (to a 2-local but not geometrically local Hamiltonian) limits the computational budget to doubly exponential in $R$ and exponential in $n$.

\subsection{Perturbative simulation tools}\label{sect: pert appen}

Here for completeness we quote some relevant technical results concerning Hamiltonian simulation from the literature.
\begin{lemma}[Second order simulation~\cite{Bravyi2014}] \label{lm: second_order} Let $V = H_1 + \Delta^{\frac{1}{2}}H_2$ be a perturbation acting on the same space as $H_0$ such that $\max(||H_1||,||H_2||) \leq \Lambda$; $H_1$ is block diagonal with respect to the split $\mathcal{H} = \mathcal{H}_- \oplus \mathcal{H}_+$ and $(H_2)_{--} = 0$.
  Suppose there exists an isometry $W$ such that $\textup{\emph{Im}}(W) = \mathcal{H}_-$ and:
  \begin{equation} \label{second_order_eq}
    || W H_{\textup{\emph{target}}} W^\dagger - (H_1)_{--} + (H_2)_{-+}H_0^{-1}(H_2)_{+-}||_{\infty} \leq \frac{\epsilon}{2}
  \end{equation} Then $\tilde{H} = H + V$ $(\frac{\Delta}{2}, \eta, \epsilon)$ simulates $H_{\textup{\emph{target}}}$, provided that $\Delta \geq O(\frac{\Lambda^6}{\epsilon^2} + \frac{\Lambda^2}{\eta^2})$.
\end{lemma}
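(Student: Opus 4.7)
The result is the standard second-order perturbation theory (Schrieffer--Wolff) statement: the unperturbed Hamiltonian $\Delta H_0$ has ground space $\mathcal{H}_-$ separated from $\mathcal{H}_+$ by a gap scaling with $\Delta$, and the perturbation $V=H_1+\Delta^{1/2}H_2$ is tuned so that the second-order correction $(H_2)_{-+}H_0^{-1}(H_2)_{+-}$ remains $O(\Lambda^2)$ independent of $\Delta$, while all first-order off-diagonal couplings vanish thanks to $(H_2)_{--}=0$ and the block-diagonality of $H_1$. The plan is to control $\tilde H = \Delta H_0 + V$ in the low-energy sector via its self-energy, and then read off the two simulation errors $\epsilon$ and $\eta$ from truncation of the perturbative series.

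First I would write the resolvent $G(z) = (z-\tilde H)^{-1}$ as a Dyson series in $V$ and extract the self-energy $\Sigma_-(z)$, defined by $[\Pi_- G(z)\Pi_-]^{-1} = z - \Sigma_-(z)$. Collecting terms and using $(H_1)_{-+}=0$ together with $V_{-+}=\Delta^{1/2}(H_2)_{-+}$, the second-order truncation at $z\approx 0$ equals
\begin{equation}
\Sigma_-^{(\leq 2)} = (H_1)_{--} - \Delta\,(H_2)_{-+}(\Delta H_0)^{-1}(H_2)_{+-} = (H_1)_{--} - (H_2)_{-+}H_0^{-1}(H_2)_{+-},
\end{equation}
which by hypothesis lies within $\epsilon/2$ of $W\Htarget W^\dagger$ in operator norm. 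The eigenvalues of $\Sigma_-^{(\leq 2)}$ then approximate the eigenvalues of $\tilde H$ in the window $[-\Delta/2,\Delta/2]$, and the true low-energy projector of $\tilde H$, extracted via a contour integral $\frac{1}{2\pi \ii}\oint (z-\tilde H)^{-1}\,dz$ around the bottom band, is a small perturbation of $\Pi_-$; from this an isometry $\tilde V$ with $\tilde V\tilde V^\dagger$ equal to that projector can be constructed and compared to $W$.

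The main obstacle is bounding the truncation error rigorously. Because $\|V\| = O(\Lambda \Delta^{1/2})$ and each internal resolvent on $\mathcal{H}_+$ contributes a factor of $O(1/\Delta)$, the order-$k$ term in the Dyson expansion scales like $O(\Lambda^k \Delta^{1-k/2})$. Summing the tail from $k\geq 3$ yields an effective-Hamiltonian error of $O(\Lambda^3/\Delta^{1/2})$, so requiring this to be $\leq \epsilon/2$ forces $\Delta \geq O(\Lambda^6/\epsilon^2)$. For the isometry error, the same resolvent expansion gives $\|\tilde V - W\|_\infty = O(\Lambda/\Delta^{1/2})$, so enforcing $\eta \leq O(\Lambda/\Delta^{1/2})$ requires $\Delta \geq O(\Lambda^2/\eta^2)$. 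Combining the two conditions reproduces the stated lower bound $\Delta \geq O(\Lambda^6/\epsilon^2 + \Lambda^2/\eta^2)$. The remaining technical points --- uniform convergence of the Dyson series on the window $|z|\leq \Delta/2$, controlling the mild $z$-dependence of $\Sigma_-(z)$ there, and verifying that every eigenvalue of $\tilde H$ below $\Delta/2$ does arise from this low branch --- are standard and can be dispatched by the block-diagonalisation formalism for gapped perturbed Hamiltonians.
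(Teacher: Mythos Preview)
The paper does not prove this lemma: it is quoted verbatim from \cite{Bravyi2014} in the appendix on perturbative simulation tools, with no argument supplied. So there is no ``paper's own proof'' to compare against.

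That said, your sketch is the standard self-energy / Schrieffer--Wolff argument that underlies the cited result, and the power-counting is correct. With $V_{-+}=\Delta^{1/2}(H_2)_{-+}$ (from block-diagonality of $H_1$) and $V_{--}=(H_1)_{--}$ (from $(H_2)_{--}=0$), the self-energy series truncated at second order gives exactly $(H_1)_{--}-(H_2)_{-+}H_0^{-1}(H_2)_{+-}$, the third-order tail is $O(\Lambda^3/\Delta^{1/2})$ forcing $\Delta\gtrsim\Lambda^6/\epsilon^2$, and the projector/isometry correction is $O(\Lambda/\Delta^{1/2})$ forcing $\Delta\gtrsim\Lambda^2/\eta^2$. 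One small wrinkle worth making explicit: in the third-order bound you should separate the $H_1$ and $\Delta^{1/2}H_2$ contributions to $V_{++}$, since only the latter carries the extra $\Delta^{1/2}$; the $H_1$ piece gives $O(\Lambda^3/\Delta)$, which is subleading, so the stated scaling survives. Apart from that, the ``standard technical points'' you defer (uniform convergence of the series on $|z|\le\Delta/2$, monotonicity of the self-energy branches) are genuinely routine here and are handled in the Bravyi--Hastings and Cubitt--Montanaro--Piddock references the paper already cites.
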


\begin{lemma}[Concatenation of approximate simulation~\cite{Cubitt2019}] \label{lm: concat} 
Let $A$, $B$, $C$ be Hamiltonians such that $A$ is a $(\Delta_A,\eta_A,\epsilon_A)$-simulation of $B$ and $B$ is a $(\Delta_B,\eta_B,\epsilon_B)$- simulation of $C$.
Suppose $\epsilon_A,\epsilon_B \leq \norm{C}$ and $\Delta_B\geq \norm{C} + 2 \epsilon_A + \epsilon_B$.
Then $A$ is a $(\Delta,\eta,\epsilon)$-simulation of $C$ where $\Delta \geq \Delta_B - \epsilon_A$ and 
\begin{equation}
\eta = \eta_A + \eta_B + O \left(\frac{\epsilon_A}{\Delta_B - \norm{C}+\epsilon_B} \right)
\quad and \quad
\epsilon = \epsilon_A + \epsilon_B + O \left(\frac{\epsilon_A\norm{C}}{\Delta_B - \norm{C} + \epsilon_B} \right).
\end{equation}
\end{lemma}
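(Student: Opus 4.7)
The plan is to construct a composite encoding by nesting the two given approximate simulations and then verify the two conditions of \cref{sim def} with the claimed parameters. Define $\mathcal{E}:=\mathcal{E}_A\circ\mathcal{E}_B$ and $\tilde{\mathcal{E}}:=\tilde{\mathcal{E}}_A\circ\tilde{\mathcal{E}}_B$. Expanding $\mathcal{E}_A(\mathcal{E}_B(M))$ and using $\overline{M\otimes N}=\bar M\otimes \bar N$ to regroup $M$- and $\bar M$-terms shows that the composite can be rewritten in the required form $V(M\otimes P+\bar M\otimes Q)V^\dagger$, with $V$ built from $V_A$ acting with $V_B$ on one branch and $\overline{V_B}$ on the other (and similarly for $\tilde V$), and with projectors $P=P_B\otimes P_A+Q_B\otimes Q_A$, $Q=Q_B\otimes P_A+P_B\otimes Q_A$ (even vs.\ odd number of conjugations). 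Locality is preserved because local encodings compose to local encodings on tensor-product systems.

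Next, take the new energy cutoff to be $\Delta:=\Delta_B-\epsilon_A$. This choice is forced by \cref{physical-properties}(i): since $A$ is an $\epsilon_A$-simulation of $B$, every low-lying eigenvalue of $A$ lies within $\epsilon_A$ of an eigenvalue of $B$, so the correct way to ``pull back'' the subspace $B_{<\Delta_B}$ through $\tilde V_A$ is to cut $A$ at $\Delta_B-\epsilon_A$. For condition (ii), telescope
\begin{equation*}
\|A_{\leq\Delta}-\tilde{\mathcal{E}}(C)\|\le \|A_{\leq\Delta}-\tilde{\mathcal{E}}_A(B_{\leq\Delta_B})\|+\|\tilde{\mathcal{E}}_A(B_{\leq\Delta_B})-\tilde{\mathcal{E}}_A(\tilde{\mathcal{E}}_B(C))\|.
\end{equation*}
The second term is $\le\epsilon_B$ since $\tilde{\mathcal{E}}_A$ is isometric on the relevant subspace and $\|B_{\leq\Delta_B}-\tilde{\mathcal{E}}_B(C)\|\le\epsilon_B$ by hypothesis. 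The first term splits into $\|A_{\leq\Delta_A}-\tilde{\mathcal{E}}_A(B)\|\le\epsilon_A$ plus the contribution from $\tilde{\mathcal{E}}_A(B_{>\Delta_B})$, which is orthogonal to the $<\Delta$ subspace of $A$ only up to a $\sin\theta$-type mismatch controlled by the spectral gap of $B$ between the encoded subspace (lying in $[0,\|C\|+\epsilon_B]$) and the complementary subspace (lying above $\Delta_B$); this yields the correction $O(\epsilon_A\|C\|/(\Delta_B-\|C\|+\epsilon_B))$.

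For condition (i), set $\tilde V:=\tilde V_A\tilde V_B$ with the block structure from Step 1. By the triangle inequality and unitarity of the isometries,
\begin{equation*}
\|\tilde V-V\|\le \|\tilde V_A\|\,\|\tilde V_B-V_B\|+\|\tilde V_A-V_A\|\,\|V_B\|\le \eta_A+\eta_B.
\end{equation*}
One then has to check that, after a small rotation, the image of $\tilde V$ coincides \emph{exactly} with the $<\Delta$ eigenspace of $A$, as required by \cref{sim def}(i). The same spectral-gap estimate as above bounds the necessary rotation by $O(\epsilon_A/(\Delta_B-\|C\|+\epsilon_B))$, which is absorbed into $\eta$ as an additive correction.

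The main obstacle is the perturbative bookkeeping linking the two candidate ``low-energy'' subspaces: the literal image of $\tilde V_A\tilde V_B$ and the subspace $A_{<\Delta}$ that defines the composite encoding. These subspaces are close but not equal, and controlling their overlap requires a Davis--Kahan (or equivalent first-order perturbation) argument whose denominator is precisely $\Delta_B-\|C\|-\epsilon_B$; the hypothesis $\Delta_B\ge\|C\|+2\epsilon_A+\epsilon_B$ is what keeps this denominator positive and the perturbative expansion under control. This single estimate produces both the $1/(\Delta_B-\|C\|+\epsilon_B)$ denominators in the correction terms and the stated lower bound $\Delta\ge\Delta_B-\epsilon_A$ on the new cutoff.
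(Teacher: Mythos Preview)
The paper does not give its own proof of this lemma: it is stated in \cref{sect: pert appen} purely as a quoted technical result from \cite{Cubitt2019}, with no accompanying argument. So there is nothing in the paper to compare your proposal against.

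That said, your sketch is essentially the standard proof from \cite{Cubitt2019}: compose the two encodings, telescope the eigenvalue error through the isometry $\tilde{\mathcal{E}}_A$, and use a Davis--Kahan/$\sin\theta$ estimate governed by the spectral gap $\Delta_B-(\|C\|+\epsilon_B)$ to control both the residual rotation needed to align $\mathrm{Im}(\tilde V_A\tilde V_B)$ with $A_{\leq\Delta}$ and the leakage of $\tilde{\mathcal{E}}_A(B_{>\Delta_B})$ into the low-energy subspace. One small slip: in your final paragraph you write the denominator as $\Delta_B-\|C\|-\epsilon_B$, whereas the statement (and your earlier displays) have $\Delta_B-\|C\|+\epsilon_B$; the sign on $\epsilon_B$ should be consistent, and in the original reference the relevant gap bound is $\Delta_B-\|C\|-\epsilon_B$ (since the encoded eigenvalues of $B$ lie in $[-\epsilon_B,\|C\|+\epsilon_B]$), so the lemma as quoted in this paper appears to carry a typo in that denominator.
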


In \cref{sect: pert scaling} we use the following technical result.
\begin{lemma}[Closed form of $\delta_x$] \label{lm delta_x} 
And element in the series $\delta$ defined recursively,
\begin{equation}
\delta_x = \begin{cases}
\delta_0 & x=0\\
4\sum_{l=0}^{x-1}\frac{\delta_l}{2^{x-l}} & x>0,
\end{cases}
\end{equation}
has the closed form for $x>0$,
\begin{equation}
\delta_x = \frac{5^{x-1}}{2^{x-2}}\delta_0.
\end{equation}
\end{lemma}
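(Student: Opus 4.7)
The plan is to derive a simple two-term recurrence for the sequence $\{\delta_x\}_{x \geq 1}$ and then solve it by a direct geometric iteration rather than attempting a full induction on the summation formula. The key observation is that the sum defining $\delta_{x+1}$ contains a rescaled copy of the sum defining $\delta_x$, which collapses the convolution-type recursion into a one-step relation.

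Concretely, I would first verify the base case $\delta_1 = 4\delta_0/2 = 2\delta_0$, which agrees with $5^{0}/2^{-1} \cdot \delta_0$. Next, for $x \geq 1$, I would split off the $l=x$ term in the expression for $\delta_{x+1}$:
\begin{equation}
\delta_{x+1} \;=\; 4\sum_{l=0}^{x} \frac{\delta_l}{2^{x+1-l}} \;=\; \frac{4\delta_x}{2} \,+\, \frac{1}{2}\left( 4\sum_{l=0}^{x-1} \frac{\delta_l}{2^{x-l}}\right) \;=\; 2\delta_x + \tfrac{1}{2}\delta_x \;=\; \tfrac{5}{2}\,\delta_x.
\end{equation}
This reduces the recursion to $\delta_{x+1} = \tfrac{5}{2}\delta_x$ for every $x \geq 1$, so iterating from the base case gives
\begin{equation}
\delta_x \;=\; \left(\tfrac{5}{2}\right)^{x-1} \delta_1 \;=\; 2\delta_0 \cdot \frac{5^{x-1}}{2^{x-1}} \;=\; \frac{5^{x-1}}{2^{x-2}}\,\delta_0,
\end{equation}
which is the claimed closed form.

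As a fallback (or as an independent verification), one could instead proceed by strong induction on $x$: assume the formula holds for all indices $1 \leq l < x$, substitute into the defining sum, separate out the $l=0$ term, and evaluate the resulting geometric series $\sum_{l=1}^{x-1} 5^{l-1}$. The constant factors then arrange so that the $l=0$ contribution exactly cancels the $-1$ from the geometric sum, leaving $5^{x-1}/2^{x-2}\cdot \delta_0$. There is no real obstacle here; the only thing to watch is the off-by-one in the exponent $2^{x-2}$ (which becomes negative for $x=1$ and must be interpreted as $1/2$), and ensuring that the telescoping-style cancellation in the induction step is carried out carefully. The first approach via the one-step recurrence is cleaner and is what I would present.
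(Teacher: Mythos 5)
Your proof is correct and takes a genuinely different—and arguably cleaner—route than the paper's. The paper computes $\delta_1$ through $\delta_5$ explicitly, "notices" the pattern $\delta_x = \frac{4}{2^x}\sum_{k=0}^{x-1}\binom{x-1}{k}4^k$, and then invokes the binomial identity $\sum_{k=0}^{m}\binom{m}{k}4^k = 5^m$; as written this is a pattern-spotting argument, not a complete proof, since the binomial-coefficient pattern is asserted rather than established by induction. Your approach instead extracts a one-step recurrence by peeling off the $l=x$ term from $\delta_{x+1} = 4\sum_{l=0}^{x}\delta_l/2^{x+1-l}$ and recognizing the remainder as $\tfrac{1}{2}\delta_x$ (valid for $x\geq 1$, where the sum form of $\delta_x$ applies), yielding $\delta_{x+1} = \tfrac{5}{2}\delta_x$; iterating from $\delta_1 = 2\delta_0$ gives the closed form immediately. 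This bypasses the binomial-coefficient detour entirely, avoids the need for an induction to confirm the guessed pattern, and is self-contained. The one small caution—that the peeling step requires $x\geq 1$ so the base case $\delta_1$ must be checked separately, and that $2^{x-2}$ in the final exponent is $1/2$ when $x=1$—you have already flagged. Good work.
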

\begin{proof}
The first few elements of the series read,
\begin{align}
\delta_1 & = 4\left[\frac{1}{2} \right]\delta_0\\
\delta_2 & = 4\left[\frac{1}{2^2} + \frac{4}{2^2} \right]\delta_0\\
\delta_3 & = 4\left[\frac{1}{2^3} + \frac{2\times 4}{2^3} + \frac{4^2}{2^3} \right]\delta_0\\
\delta_4 & = 4\left[\frac{1}{2^4} + \frac{3 \times 4}{2^4} + \frac{3 \times 4^2}{2^4} + \frac{4^3}{2^4} \right]\delta_0\\
\delta_5 & = 4\left[\frac{1}{2^5} + \frac{4\times 4}{2^5} + \frac{6 \times 4^2}{2^5} + \frac{4 \times 4^3}{2^5}+ \frac{4^4}{2^5}\right]\delta_0.
\end{align}
We notice the pattern,
\begin{equation}
\delta_x = \frac{2^2}{2^x} \sum_{k=0}^{x-1} {x-1\choose k} 4^{k}.
\end{equation}
We then use the identity $\sum_{k=0}^{m} {m\choose k} 4^{k} = 5^{m}$ to obtain the quoted result. 
\end{proof}

\subsubsection{Subdivision gadget used in the construction}\label{sect: subdivision}
The subdivision gadget is used to simulate a $k$-local interaction by interactions which are at most $\ceil{\frac{k}{2}}+1$-local.
We want to simulate the Hamiltonian:
\begin{equation}
H_{\textup{target}} = \delta_0 (P_{A}\otimes P_{B} + P_{A}^\dagger\otimes P_{B}^\dagger)
\end{equation}
Let $\tilde{H} = H+V$ where:
\begin{equation}
H = \Delta \Pi_+
\end{equation}
\begin{equation}
V = H_1 + \Delta^{\frac{1}{2}}H_2
\end{equation}
where:
\begin{equation}
\Pi_+ = \ket{1}\bra{1}_w+\ket{2}\bra{2}_w+...+\ket{p-1}\bra{p-1}_w
\end{equation}
\begin{equation}
H_1 = 2\delta_0 \identity 
\end{equation}
\begin{equation}
H_2 =\sqrt{ \frac{\delta_0}{2}}\left(-P_{A}\otimes X_w - P_A^\dagger \otimes X_w^\dagger + P_B\otimes X_w^\dagger + P_B^\dagger\otimes X_w\right)
\end{equation}
The degenerate ground space of $H$ has the mediator qubit $w$ in the state $\ket{0}\bra{0}$ so $\Pi_- = \ket{0}\bra{0}_w$. This gives:

\begin{equation}
(H_{1})_{--}=  2 \delta_0 \identity  \otimes \ket{0}\bra{0}_w
\end{equation}
and:
\begin{equation}
(H_2)_{-+} =\sqrt{ \frac{\delta_0}{2}}\left(-P_A \otimes \ket{0}\bra{p-1}_w - P_A^\dagger \otimes \ket{0}\bra{1}_w + P_B \otimes \ket{0}\bra{1}_w + P_B^\dagger \otimes \ket{0}\bra{p-1}_w\right)
\end{equation}
Therefore:
\begin{equation}
(H_2)_{-+}H_0^{-1}(H_2)_{+-} =\delta_0 \left(P_A \otimes P_B + P_A^\dagger \otimes P_B^\dagger - 2\identity  \right) \otimes \ket{0}\bra{0}_w
\end{equation}

If we define an isometry $W$ by $W\ket{\psi}_A = \ket{\psi}_A\ket{0}_w$ then:
\begin{equation}
 || W H_{\textup{target}} W^\dagger - (H_1)_{--} + (H_2)_{-+}H_0^{-1}(H_2)_{+-}|| =  0
\end{equation}
Therefore \cref{second_order_eq} is satisfied for all $\epsilon \geq 0$. So, provided a $\Delta$ is picked which satisfies the conditions of \cref{second_order_eq}, $\tilde{H}$ is a $(\frac{\Delta}{2},\eta,\epsilon)$-simulation of $H_{\textup{target}}$.

\subsection{Limited computational budget with perturbative simulation}\label{sect: pert scaling}

At round $r$ we use a "heavy" Hamiltonian of strength,
\begin{equation}
\Delta_r = b \gamma_r^{-2},
\end{equation}
where $\gamma_r$ is initially general and we just write the above for algebraic convenience that will become apparent. 

Using second order $(\epsilon, \eta, \Delta)$-simulation the errors (\cref{lm: second_order}) in each round is given by,
\begin{equation} 
\epsilon_r \geq O \left( \frac{\Lambda_r^3}{\sqrt{\Delta_r}}\right), \qquad \eta_r \geq O\left( \frac{\Lambda_r}{\sqrt{\Delta_r}}\right)
\end{equation}
where $\Lambda = \max \{ \norm{H_1^{(r)}}, \norm{H_2^{(r)}}\}$.

We will denote by $h_t^{(r)}$ the norm of the Hamiltonian of the target Hamiltonian that is simulated in round $r$, and assume $b<1$ so that $\sqrt{b}>b$.
After enumerating a few rounds of the subdivision gadget (\cref{sect: subdivision}) the general scaling of the simulation error, $\epsilon_r$, and the Hamiltonian norm as the subdivision is applied recursively is found to be 
\begin{align}
h_t^{(r)} & = O \left(b \gamma_r^{-1}\gamma_{r-1}^{-1/2}...\gamma_0^{-1/2^r} \right),\label{eqn ht}\\
\epsilon_r &\geq O \left(b \frac{\gamma_r}{\left(\gamma_0^{1/2^r}...\gamma_{r-1}^{1/2}\right)^3}  \right),\label{eqn epsilonr}\\
\eta_r & \geq O\left(\frac{\gamma_r}{\gamma_{r-1}^{1/2^2}\gamma_{r-2}^{1/2^3}...\gamma_0^{1/2^{r+1}}}\right),
\end{align}
for $r>1$.
The convenience of working with $\gamma_r$ rather that $\Delta_r$ becomes apparent since the error and the Hamiltonian are always proportional to $b$.

Now to turn the products into sums define $\delta_r$ by,
\begin{equation}
\gamma_r^{-2} = \tau^{\delta_r \log(n) R},
\end{equation}
so that now \cref{eqn ht} and \cref{eqn epsilonr} read as:
\begin{align}
h_t^{(r+1)} & = O \left(b \tau^{R\log(n)\sum_{l=0}^{r} \frac{\delta_l}{2^{r-l}}} \right)\label{eqn ht2}\\
\epsilon_r &\geq O \left(b \tau^{-\frac{R\log(n)}{2}\left(\delta_r - 3 \sum_{l=0}^{r-1}\frac{\delta_l}{2^{r-l}}\right)}  \right)\label{eqn epsilonr2}\\
\eta_r & \geq O \left(\tau^{-\frac{R\log(n)}{2}}\left(\delta_r - \frac{1}{4}\sum_{l=0^{r-1}}\frac{\delta_l}{2^{r-l}} \right) \right)
\end{align}
We can now analyse what we require to have a good simulation. 

From \cref{lm: concat} (if the conditions hold) the final error in the Hamiltonian is given by,
\begin{equation}
\mathbf{\epsilon} = \sum_{r=0}^{\log(n)R} \epsilon_r, \qquad \mathbf{\eta} =  \sum_{r=0}^{\log(n)R} \eta_r.
\end{equation}
For error in the simulation to be small with respect to the target Hamiltonian norm $b$, we require,
\begin{equation}
\delta_r > 3 \sum_{l=0}^{r-1}\frac{\delta_l}{2^{r-l}}.
\end{equation}
To ensure this we set $\delta_r = (3+a) \sum_{l=0}^{r-1}\frac{\delta_l}{2^{r-l}}$ for some constant $a$.
This choice also ensures that the conditions in \cref{lm: concat} are satisfied and we 
The choice of $a$ does not affect the scaling so we set $a=1$ since \cref{lm delta_x} gives a convenient closed form for $\delta_r$ (see \cref{sect: pert appen}).

The sum in \cref{eqn ht2} is now given by,
\begin{align}
\sum_{l=0}^{r} \frac{\delta_l}{2^{r-l}}&=\frac{\delta_0}{2^r} + \sum_{l=1}^{r} \frac{1}{2^{r-l}}\frac{5^{l-1}}{2^{l-2}}\delta_0 \\
& = \frac{\delta_0}{2^r} +\frac{\delta_0}{2^{r-2}}\sum_{l=1}^{r}5^{l-1}\\
& =  \frac{\delta_0}{2^r} + \delta_0 2^{-r} (5^r - 1)\\
& = \left(\frac{5}{2}\right)^r \delta_0
\end{align}
Hence,
\begin{align}
h_t^{(r+1)} & = O \left(b \tau^{R\log(n)(5/2)^r\delta_0} \right)\label{eqn ht3}\\
\epsilon_r &\geq O \left(b \tau^{-\frac{R\log(n)}{2}\left(\frac{5}{2}\right)^{r-1}\delta_0}  \right)\label{eqn epsilonr3}\\
\eta_r & \geq O \left(\tau^{-\frac{R\log(n)}{4}\left(\frac{5}{2}\right)^{r-1}\delta_0} \right)
\end{align}

The final approximate simulation has,
\begin{align}\label{eqn sim scaling}
&\epsilon  \geq O \left(\log(n)Rb\tau^{-R\log(n)\delta_0/4}  \right)\\
& \eta \geq O \left( \log(n)R \tau^{-\log(n)R\delta_0/2} \right)\\
& \Delta  \geq \Delta_1 - \epsilon' = b\tau^{\delta_0 R \log(n)} - O(\log(n)Rb \tau^{-R\log(n)\delta_0/4})
\end{align}
since the first round of simulation has the largest upper bound of \\
${\epsilon_1 \geq O\left(b\tau^{-R\log(n)\delta_0/4} \right)}$ and $\eta_1 \geq O \left(\tau^{-\log(n)R\delta_0/2} \right)$.
The final simulating Hamiltonian is norm,
\begin{equation}\label{eqn sim norm}
\norm{H_\textup{sim}} = O(b\tau^{Rn\log(n)(5/2)^R\delta_0}).
\end{equation}

We require two conditions such that the boundary Hamiltonian is an approximate simulation of the central bulk operation that is causal. 
Firstly, the simulation Hamiltonian must have small interaction strengths so that the Lieb Robinson is constant, i.e. \cref{eqn sim norm} must be at most $O(1)$ with respect to $R$.
Secondly the simulation approximation must be small with respect to the target Hamiltonian norm, i.e. from \cref{eqn sim scaling}, $O(\log(n)R\tau^{-R\log(n)\delta_0/4}) = O(1)$.
In order to do this we are free to chose $\delta_0$.

To achieve a reasonable Lieb-Robinson velocity it would be helpful to make $\delta_0$ scale as $O\left(\frac{e^{-R}}{nR}\right)$, this would obtain $O(1)$ Lieb-Robinson velocities for $b=O(1)$ norm bulk Hamiltonians.
However this leads to a simulation error that completely swamps the target Hamiltonian in \cref{eqn sim scaling}.
Imposing a reasonable simulation error restricts $\delta_0$ to be a small constant independent of $R$ and $n$. 
Once $\delta_0$ is set, to achieve a reasonable Lieb-Robinson velocity from $\norm{H_\textup{sim}}$ the bulk Hamiltonian norm is forced to be doubly exponentially small in $R$ and singularly exponentially small in $n$: $b = O\left(\tau^{-nR\log(n)(5/2)^R} \right)$.
Hence the complexity of the local unitary is of very low complexity and the protocol is trivial.

We have argued that with perturbation gadgets we do not expect to be able to do any better than this limited protocol.

\section{Lower bound on simulation from causality}\label{appen: LR}

Here we consider a simple causality argument to get a lower bound on the strengths of a 2-geometrically local Hamiltonian that simulates a $n$ local one. 
Consider the target Hamiltonian $H  = \frac{1}{2}(\identity_A \otimes \identity_B + X_A\otimes X_B + Y_A\otimes Y_B + Z_A\otimes Z_B)$.
While this Hamiltonian is $2$-local we consider $A$ and $B$ being $n$ qubits apart so that the Hamiltonian spreads correlations across $n$ qubits almost instantaneously (clearly a feature of $n$-local systems).
Given an initial state $\ket{\psi}_A\otimes \ket{0}_B$ evolution under this Hamiltonian for a time $t$ is given by, 
\begin{equation}
\ee^{-\ii tH}\ket{\psi}_A \ket{0}_B =\cos(t) \ket{\psi}_A\ket{0}_B - i \sin(t) \ket{0}_A\ket{\psi}_B.
\end{equation}
Hence in time $\pi/2$ the information at $A$ has been transferred to $B$ under the Hamiltonian.

If we now simulate this Hamiltonian with $\tilde{H} = \sum_{i=1}^{n-1}\tilde{h}_{i,i+1}$ where $\tilde{h}_{i,i+1} =  \frac{\tau}{2}(\identity_i \otimes \identity_{i+1} + X_i \otimes X_{i+1} + Y_i\otimes Y_{i+1} + Z_i\otimes Z_{i+1})$.
Given the initial state $\ket{\psi}_1\ket{0}_n$, we impose the state after evolution for $t=\pi/2$ must be $\ket{0}_1\ket{\psi}_n$ to emulate the target Hamiltonian.
The interaction strength of the simulator Hamiltonian must be $\tau = n$ in order to propagate the information through the chain of interactions in the same time. 
Hence this gives an example of a sparse $n$-local Hamiltonian that a universal geometrically local Hamiltonian will have to simulate which requires interaction strength linear in $n$.
While we give a specific example of a simulating Hamiltonian it is clear that the form of the interactions cannot be modified to propagate information faster since each link is maximally entangling. 

Consider an approximation in the time dynamics simulator, $\norm{\ee^{-\ii \tilde{H} \pi/2}\ket{\psi}_1\ket{0}_n - \ket{0}_n\ket{\psi}_1}_1 \leq \epsilon$.
As in \cref{them PBQC sim} we may hope to find a lower bound on the interaction strength that encorportates the error in the simulation. 
For this example we can translate the error in the state to, 
\begin{equation}
\abs{\cos^n\left(\frac{\tau \pi}{2 n} \right)}^2 \leq \epsilon. 
\end{equation}
We can see here if $\tau$ scales sub-linearly with $n$ then for large $n$ we have $\abs{\cos^n\left(\frac{\tau \pi}{2 n} \right)}^2 = 1$ and the simulator state is orthogonal to the target. 
So in this case we are unable to obtain an interaction strength bound incorporating more parameters of the simulation.

This investigation, while rudimentary, intuitively demonstrates the lower bounds presented in this paper are not immediate from causality arguments. 
We leave further investigation into other methods of reasoning about simulation lower bounds to further work.

\end{appendices}

\printbibliography

\end{document}